\newtheorem{assumption}{Assumption}
\newcommand*{\captionsource}[2]{%
  \caption[{#1}]{%
    #1%
    \\\hspace{\linewidth}%
    \textbf{Source:} #2%
  }%
}
\setlist[enumerate]{leftmargin=.5in}
\setlist[itemize]{leftmargin=.5in}
\crefname{hypothesis}{Hypothesis}{Hypotheses}
\title{The Omniscient yet Lazy Investor\thanks{Submitted to the editors DATE.
\funding{The author has not received any funding}}}
\author{Stanisław M. S. Halkiewicz\thanks{Department of Applied Mathematics, AGH University of Cracow, Kraków, Poland 
  (\email{smsh@student.agh.edu.pl}).}}
\newcommand*{\addFileDependency}[1]{
  \typeout{(#1)}
  \@addtofilelist{#1}
  \IfFileExists{#1}{}{\typeout{No file #1.}}
}
\newcommand*{\myexternaldocument}[1]{%
    \externaldocument{#1}%
    \addFileDependency{#1.tex}%
    \addFileDependency{#1.aux}%
}
\begin{document}

\maketitle

%

\begin{abstract}
We formalize the paradox of an \emph{omniscient yet lazy investor}—a perfectly informed agent 
who trades infrequently due to execution or computational frictions.
Starting from a deterministic geometric construction, we derive a closed-form expected profit function linking trading frequency, execution cost, and path roughness. 
We prove existence and uniqueness of the optimal trading frequency and show that this 
optimum can be interpreted through the fractal dimension of the price path. 
A stochastic extension under fractional Brownian motion provides analytical expressions 
for the optimal interval and comparative statics with respect to the Hurst exponent. 
Empirical illustrations on equity data confirm the theoretical scaling behavior.
\end{abstract}

\begin{keywords}
trading frequency, execution costs, fractal markets, fractional Brownian motion, 
optimization, algorithmic trading
\end{keywords}

\begin{AMS}
91G80, 60G22, 62P05, 91B84, 90C26, 91B70
\end{AMS}

\section{Introduction}
\label{sec:intro}

Modern financial markets operate under a tension between the speed of information 
and the frictions of execution \cite{Kyle1985,Almgren2000,Obizhaeva2013}.  
Algorithmic traders, institutional portfolio managers, and even theoretical 
agents must continually decide not only \emph{what} to trade but also 
\emph{how frequently} to act.
In principle, an investor endowed with perfect foresight could exploit every 
profitable fluctuation of an asset's price path; 
in practice, each decision carries tangible costs --- transaction fees, 
market impact, computational latency, and cognitive or regulatory frictions 
\cite{Cartea2015,Sims2003,Gabaix2019}.  
The result is an optimization problem that balances 
omniscience with inertia: the investor knows everything, yet cannot act continuously.  

This paper formalizes that paradox through the stylized construct of 
the \emph{omniscient yet lazy investor}.  
The model postulates an agent with complete knowledge of the future price path, 
subject to additive execution costs and a cumulative penalty 
for the mere act of trading or recalculating --- a ``laziness cost'' representing 
bounded rationality, algorithmic latency, or decision fatigue 
\cite{Caplin2015,Fudenberg2018,Cartea2015}.  
The investor’s problem is to determine the number of trades (or, equivalently, 
the trading frequency) that maximizes expected total profit over a finite horizon.

Starting from a deterministic geometric setting, we derive a closed-form 
expression for the investor’s expected profit as a function of the number of 
trading intervals, the per-trade friction, and a parameter describing 
the roughness of the price path.  
This formulation extends the classical portfolio-rebalancing literature on 
transaction costs \cite{Magill1976,Davis1990,Shreve1994,Korn1998}, 
interpreting frequency choice as a discrete control variable rather than 
a boundary condition on wealth.  
The resulting trade-off is intuitive: increasing trading frequency 
magnifies exploitable price variation but also raises costs, 
and the total profit eventually decreases, producing a finite optimum.

A key insight of the model is its connection to fractal geometry.  
By relating the effective length of the price path to its scaling exponent, 
we interpret the optimal frequency through the \emph{fractal dimension} 
of the underlying trajectory.  
This view builds on the Fractal Market Hypothesis, originally proposed by Peters~\cite{Peters1994}, 
which attributes market stability to the coexistence of heterogeneous investment horizons.  
Empirical studies have shown that financial time series exhibit 
self-similarity and scale-invariant roughness \cite{Mandelbrot1967,Kristoufek2013,Liu2022}, 
motivating the explicit use of fractal measures in our analysis.  
In this interpretation, rougher (more irregular) price paths correspond to 
higher optimal trading frequencies.  

To ground the model in a stochastic environment, we extend the framework to 
price dynamics driven by fractional Brownian motion (fBM) with Hurst exponent 
$H \in (0,1)$ \cite{Mandelbrot1968,Biagini2008,Mishura2008}.  
Under the self-similarity property of fBM, 
the expected exploitable price increment scales as $\Delta^{H}$, 
and the deterministic profit formula generalizes naturally to a stochastic one.  
We obtain explicit expressions for the optimal rebalancing interval 
$\Delta^\star = [\,\bar s / (\kappa(1-H))\,]^{1/H}$, 
where $\bar s$ denotes the effective execution friction 
and $\kappa$ is a scaling constant determined by volatility and normalization.  
This result confirms the theoretical intuition: as the path becomes more fractal 
(smaller $H$), the investor should act more frequently 
\cite{Kakinaka2025,Bennedsen2021,Fukasawa2023}.  

Numerical and empirical examples illustrate the analytical results.  
Using equity data, we show that the observed profit function follows 
the predicted concave shape, and the empirically optimal frequency 
lies close to the theoretical one derived under the fBM approximation 
\cite{Bayraktar2004,FernandezMartinez2019,Verma2024,Alizade2025}.  
The analysis thus connects geometric properties of price paths with 
economically meaningful decisions about trading intensity, 
in the spirit of the fractal interpretation of market behavior 
\cite{Peters1994,Halkiewicz2024,Sornette2003,Mantegna1999}. 

The rest of the paper is organized as follows. 
In section~\ref{sec:litreview} we review the literature concerning fractal structure of capital markets and place our model within the existing frameworks.
Section~\ref{sec:model} introduces the model framework and notation.  
Section~\ref{sec:deterministic} derives the deterministic closed-form profit formula.  
Section~\ref{sec:optimization} studies the existence and properties of the 
optimal trading frequency.  
Section~\ref{sec:stochastic} develops the stochastic extension based on 
fractional Brownian motion and provides comparative statics.  
Section~\ref{sec:empirical} presents empirical and simulation evidence and Section~\ref{sec:conclusion} discusses implications for algorithmic trading, concludes and suggests pathways for future research.

\section{Literature Review}
\label{sec:litreview}

The present study builds upon two mature yet historically separate research traditions in mathematical finance:  
(1) the optimization of portfolio rebalancing under proportional transaction costs, and  
(2) the fractal modeling of market dynamics, including the Fractal Market Hypothesis (FMH).  
Bridging these frameworks provides a geometric and stochastic interpretation of trading frequency as an endogenous response to both frictions and price‐path roughness.

\subsection{Portfolio Optimization under Transaction Costs}

The study of optimal portfolio rebalancing in frictional markets originates from the seminal works of Davis and Norman~\cite{Davis1990} and Shreve and Soner~\cite{Shreve1994}.  
Their continuous‐time impulse‐control formulations established the existence of \emph{no‐trade regions}—intervals of inaction within which the marginal cost of trading exceeds the marginal benefit of rebalancing.  
These models formalized the balance between maintaining target allocations and minimizing cumulative transaction costs, and they provided a mathematical basis for the design of modern trading and execution strategies.  
Subsequent research, including Korn~\cite{Korn1998}, extended the analysis to stochastic volatility, portfolio insurance, and partial‐information settings, integrating dynamic programming and viscosity‐solution techniques.  
This literature underpins the first strand of the present work: trading frequency as an optimization variable constrained by proportional and cognitive frictions.

\paragraph{Fractal geometry and market structure.}
Fractal geometry was introduced to financial analysis by Mandelbrot~\cite{Mandelbrot1997},
whose pioneering studies on scaling laws in price series and heavy‐tailed distributions
challenged the Gaussian assumptions of the Efficient Market Hypothesis (EMH).  
Building on these ideas, Peters~\cite{Peters1994} proposed the Fractal Market Hypothesis (FMH), arguing that market stability depends on a heterogeneous spectrum of investment horizons,
with financial crises corresponding to periods dominated by short‐term trading behavior.  
FMH emphasises self-similarity, roughness, and scale invariance in market dynamics rather than perfect informational efficiency.

\subsection{Fractional Processes and Rough Volatility}

A rigorous stochastic foundation for fractal scaling was established by Mandelbrot and van Ness~\cite{Mandelbrot1968}, who defined fractional Brownian motion (fBM) and fractional noise as models with self‐similar increments and memory parameterized by the Hurst exponent \(H\in(0,1)\).  
The Hurst exponent determines both the persistence of increments and the Hausdorff (fractal) dimension \(D = 2 - H\) of the sample path.  
Applications of fBM to asset prices and volatility were advanced by Comte and Renault~\cite{Comte1998}, Wyss~\cite{Wyss2000} (after Kim et al.~\cite{Kim2021} and Zhang et al.~\cite{Zhang2024}), and Gatheral, Jaisson, and Rosenbaum~\cite{Gatheral2018}, the latter providing strong empirical evidence that volatility is \textit{rough}, with typical \(H\) near 0.1.  
This line of work firmly connected fractal geometry with stochastic modeling, showing that memory and roughness jointly govern volatility clustering and scaling behavior.

\subsection{Modern Developments in Fractal Financial Modeling}

Recent research has revitalized fractal approaches using both empirical and analytical tools.  
Wu et~al.~\cite{Wu2021} introduced \emph{fractal statistical measures}—fractal expectation and variance—to construct portfolio selection models under power‐law tails, yielding closed‐form weights that outperform traditional mean–variance optimization.  
Kakinaka et~al.~\cite{Kakinaka2025} studied \emph{fractal portfolio strategies} in which investor preferences over temporal scales influence performance and risk.  
El‐Nabulsi and Anukool~\cite{El-Nabulsi2025} extended this perspective to markets defined in fractional dimensions, deriving qualitative properties of asset dynamics within noninteger geometric spaces.  

Parallel advances in measurement techniques have enhanced empirical precision.  
Bayraktar, Poor, and Sircar~\cite{Bayraktar2004} estimated the fractal dimension of the S\&P~500 index via wavelet analysis, linking declining Hurst exponents to increasing market efficiency.  
Verma and Kumar~\cite{Verma2024} analyzed post–merger financial performance using fractal interpolation and box dimension metrics,  
while Alizade et~al.~\cite{Alizade2025} modeled market turbulence through Laplace–Mittag–Leffler distributions, capturing heavy tails and memory effects beyond classical Lévy frameworks.  
The mathematical foundations of fractal dimension estimation in applied finance are comprehensively presented by Fernández–Martínez et~al.~\cite{FernandezMartinez2019}, whose work consolidates analytical and numerical techniques for quantifying complexity in time‐series data. Halkiewicz~\cite{Halkiewicz2024} provided a conceptual synthesis of market graphs as fractals. 
This interpretation aligns with the FMH and emphasizes that the complexity of market dynamics increases as the observation interval shortens—a principle central to the present paper’s formulation of frequency‐dependent profitability.

\subsection{Positioning of the Present Study}

The contribution of this work is to unify the frictional and fractal paradigms within a single analytical framework.  
While classical transaction‐cost models quantify the cost–benefit trade‐off of frequent trading, and fractal market models describe the geometry of price fluctuations, the two are rarely connected formally.  
Our model links them by interpreting the investor’s optimal trading frequency as a function of the fractal dimension (or equivalently, the Hurst exponent) of the underlying price path.  
This geometric–economic synthesis provides a closed‐form solution to the “omniscient yet lazy investor” problem and delivers a quantitative manifestation of the Fractal Market Hypothesis:  
as the market becomes rougher and more fractal, optimal trading frequency increases, reflecting the higher information content per unit time.

\section{Model Framework}
\label{sec:model}

We consider an \emph{omniscient investor}. 
Apart from knowing the answers to all metaphysical questions about the universe and everything in it, 
he naturally possesses perfect foresight regarding future price movements as well. 
He is also incurably greedy: 
\begin{itemize}
    \item having already solved the Millennium Problems,
    \item discovered every lost treasure,
    \item and won so many games of poker that every casino in Las Vegas has him blacklisted,
\end{itemize}

he still desires more and more money. Determined to turn his omniscience into yet another source of amusement and wealth, he decides to participate in the financial markets. 

Unfortunately, omniscience does not preclude indolence. 
Our investor is also profoundly lazy. Each trade, however trivial in its consequence, requires effort, attention, and perhaps the faint movement of a finger—actions he finds increasingly tiresome. 
He faces proportional execution frictions whenever he deigns to act, and cumulative decision-making costs whenever he considers doing so. 
Such frictions are pervasive in both theoretical and empirical market models \cite{Magill1976,Davis1990,Shreve1994,Kallsen2015}. 
His problem, therefore, is quintessentially human despite his divine insight: 
given total knowledge and total apathy, how often should he trade within a fixed horizon in order to maximize his expected profit?

\subsection{Discrete trading grid}

Let $T>0$ denote the total investment horizon.  
We partition $[0,T]$ into $n$ subintervals of equal length
$\Delta=T/n$, corresponding to $n=2^{m}$ trading periods indexed by
$i=0,1,\dots,n$.  
Denote by $c_i$ the asset price at time $t_i=i\Delta$.
The increment between successive observation points is
\begin{equation}
\Delta c_i = c_{i+1}-c_i.
\label{eq:increment}
\end{equation}
Such discrete rebalancing grids are standard in the literature on optimal portfolio revision and dynamic trading \cite{Magill1976,Korn1998,Cartea2015}.

Because the investor is \emph{omniscient}, each $\Delta c_i$ is known in advance.
However, the investor incurs two forms of friction when acting on this knowledge:
\begin{enumerate}
  \item a per‐trade execution cost or spread $\bar s\ge0$,
  \item an additive ``laziness cost'' $l_i\ge0$ representing 
        the cognitive, computational, or opportunity cost
        of taking a decision at time $t_i$.
\end{enumerate}
The first term represents proportional costs studied in transaction‐cost models such as those of Davis and Norman or Shreve and Soner \cite{Davis1990,Shreve1994}, 
while the second reflects decision‐making or latency penalties 
that parallel the cognitive constraints emphasized in rational‐inattention theory \cite{Sims2003,Gabaix2019,Caplin2015}.  
The total laziness cost accumulated over the horizon is denoted
\begin{equation}
L=\sum_{i=1}^{n} l_i.
\label{eq:laziness}
\end{equation}
This cumulative term generalizes beyond monetary costs, encompassing computational energy expenditure or machine‐learning inference delays in automated systems \cite{Cartea2015}.

\subsection{Profit identity}

For an omniscient trader who always takes the correct side of the market,
the gross gain over period $i$ equals the absolute price change
$|\Delta c_i|$.
After subtracting costs, the realized profit over all periods is
\begin{equation}
R = \sum_{i=1}^{n} \big(|\Delta c_i| - \bar s - l_i\big).
\label{eq:Rsum}
\end{equation}
Defining the mean exploitable return per trade as
\(
\bar r = \tfrac{1}{n}\sum_{i=1}^{n}|\Delta c_i|,
\)
we can rewrite \eqref{eq:Rsum} as
\begin{equation}
R = n(\bar r - \bar s) - L.
\label{eq:Rcompact}
\end{equation}
Expression~\eqref{eq:Rcompact} provides the fundamental relationship 
between trading frequency and total profit.  
As the number of trades $n$ increases, the exploitable mean return 
$\bar r$ typically rises because smaller intervals reveal additional
micro‐movements of the price path, a phenomenon consistent with the 
scaling laws documented in high‐frequency data \cite{Bouchaud2000,Cont2001,Calvet2002}.  
At the same time, both the proportional cost $\bar s$ and the total laziness cost $L$
reduce net profitability, in line with the classic execution–latency trade‐off in optimal trading theory \cite{Almgren2000,Obizhaeva2013,Kyle1985}.

\subsection{Decision variable and optimization problem}

The investor chooses the number of trading intervals $n$, or equivalently
the dyadic level $m$ such that $n=2^{m}$, to maximize $R$:
\begin{equation}
\max_{m\in\mathbb{N}} \; R_m := 2^{m}(\bar r_m - \bar s) - L_m.
\label{eq:optproblem}
\end{equation}
Here $\bar r_m$ and $L_m$ denote, respectively, the average exploitable 
return per trade and the cumulative laziness cost when the price path
is observed at resolution $m$.  
The trade‐off mirrors impulse‐control models in which agents balance 
costly rebalancing with expected drift \cite{Constantinides1986,Taksar1988,Kallsen2015}, 
but the present setting replaces the stochastic control boundary 
with a discrete frequency variable capturing self-similar resolution.

\subsection{Interpretation of costs}

The term $\bar s$ encompasses all proportional costs that scale 
linearly with the number of trades, 
including bid–ask spreads, slippage, and market‐impact fees \cite{Almgren2000,Cartea2015,Obizhaeva2013}.  
The term $L_m$ represents non‐linear or sublinear costs of action:
for example, human cognitive effort \cite{Fudenberg2018}, 
machine‐learning inference latency \cite{Cartea2015}, 
or computational resource usage.  
Allowing $L_m$ to grow with $m$ captures the intuitive idea that higher 
trading frequency requires disproportionately greater informational and 
technological capacity \cite{Gabaix2019,Sims2003}.
Such latency–dependent frictions have been studied extensively in HFT models, 
where execution speed, information flow, and order–book resilience jointly determine profitability 
\cite{Cartea2014,Jarrow2015}.

\subsection{Fractal scaling motivation}

Empirically, as the sampling interval $\Delta$ decreases,
the measured variation of a financial price series increases 
in a manner reminiscent of fractal scaling \cite{Mandelbrot1967,Mandelbrot1997,Calvet2002}.  
This observation motivates modeling $\bar r_m$
as a function of the effective \emph{roughness} of the price path,
a property long noted in the context of fractal market geometry \cite{Peters1994,Gatheral2018book}.  
In Section~\ref{sec:deterministic} we formalize this dependence and derive a closed‐form expression for $R_m$
based on a geometric construction involving a scaling parameter~$W$
that quantifies the fractal complexity of the trajectory \cite{Feder1988,Lux2003}.

\section{Deterministic Fractal Derivation}
\label{sec:deterministic}

This section provides a geometric scaling argument that links the exploitable
per–trade move to the sampling resolution and a roughness parameter.
Combined with~\eqref{eq:Rcompact}, it yields a closed-form expression for
$R_m$ as a function of the dyadic level~$m$.

\subsection{Triangle construction and scaling postulate}

The idea that apparent path length depends on observation scale is central in
fractal geometry \cite{Mandelbrot1967,Feder1988}.  
We apply it here in the simplest possible way.

Fix the horizon $T>0$ and a dyadic resolution $m\in\mathbb{N}$ with
$n=2^m$ subintervals and step $\Delta=T/2^m$.  
Over one subinterval $[t_i,t_{i+1}]$, we represent the effective local
displacement by a right triangle with horizontal leg $\Delta$, vertical leg
$\bar h_m\ge0$ (the mean exploitable move), and an auxiliary microstructure
scale $W^m c_0$ that captures residual oscillations at resolution $m$
\cite{Peters1994,Mandelbrot1997}.  

\begin{assumption}[Geometric scaling]
\label{ass:scaling}
There exist constants $W>0$ and $c_0>0$ such that for each dyadic level $m$,
the components of the local displacement satisfy the Pythagorean relation
\begin{equation}
\Big(\tfrac{T}{2^m}\Big)^{\!2}
\;=\;
\bar h_m^{\,2} \;+\; W^{2m} c_0^{\,2}.
\label{eq:pyth}
\end{equation}
\end{assumption}

Identity~\eqref{eq:pyth} is purely geometric: the chord length per subinterval
is fixed by the sampling step, while the vertical excursion and the
microstructure term trade off as resolution changes.  
This representation mirrors classical self-affine constructions in fractal
curves, such as those used to define the coastline paradox or Brownian paths
\cite{Feder1988,Mandelbrot1997,Lux2003}.  

\begin{figure}[h]
\centering
\begin{tikzpicture}[scale=2.1, line cap=round, line join=round]

\coordinate (A1) at (0,0);
\coordinate (B1) at (1.9,0);
\coordinate (C1) at (1.9,1.25);

\coordinate (A2) at (3.30,0);
\coordinate (B2) at (5.10,0);
\coordinate (C2) at (5.10,1.05);

\draw[thick] (A1)--(B1)--(C1)--cycle;
\draw[thick] (A2)--(B2)--(C2)--cycle;

\draw ($(B1)+(-0.14,0)$) -- ($(B1)+(-0.14,0.14)$) -- ($(B1)+(0,0.14)$);
\draw ($(B2)+(-0.14,0)$) -- ($(B2)+(-0.14,0.14)$) -- ($(B2)+(0,0.14)$);

\draw[thin] (A1) ++(0:0.22) arc (0:33:0.22);
\draw[thin] (A2) ++(0:0.22) arc (0:30:0.22);

\node[scale=0.95] at ($(A1)!0.5!(C1)+(0,1.05)$) {singular $i$th period};
\node[scale=0.95] at ($(A2)!0.5!(C2)+(0,0.95)$) {averaged period};

\node[below=4pt] at ($(A1)!0.5!(B1)$) {$\Delta=\dfrac{T}{2^m}$};
\node[below=4pt] at ($(A2)!0.5!(B2)$) {$\Delta=\dfrac{T}{2^m}$};

\draw[decorate,decoration={brace,amplitude=3pt}]
      (B1) -- (C1) node[midway,xshift=10pt] {$h_{ni}$};
\node[sloped,above=2pt,fill=white,inner sep=1pt]
      at ($(A1)!0.58!(C1)$) {$c_i$};

\draw[decorate,decoration={brace,amplitude=3pt}]
      (B2) -- (C2) node[midway,xshift=10pt] {$\bar h_m$};
\node[sloped,above=2pt,fill=white,inner sep=1pt]
      at ($(A2)!0.58!(C2)$) {$W^m c_0$};

\end{tikzpicture}

\caption{Right-triangle construction per subinterval at resolution $m$.  
Both triangles share base $\Delta=T/2^m$ and satisfy
$\left(\tfrac{T}{2^m}\right)^2=\bar h_m^2+W^{2m}c_0^2$.  
The relation models the scale-dependent roughness of price increments 
observed empirically in fractal market studies \cite{Calvet2002,Bouchaud2000}.}
\label{fig:triangle}
\end{figure}
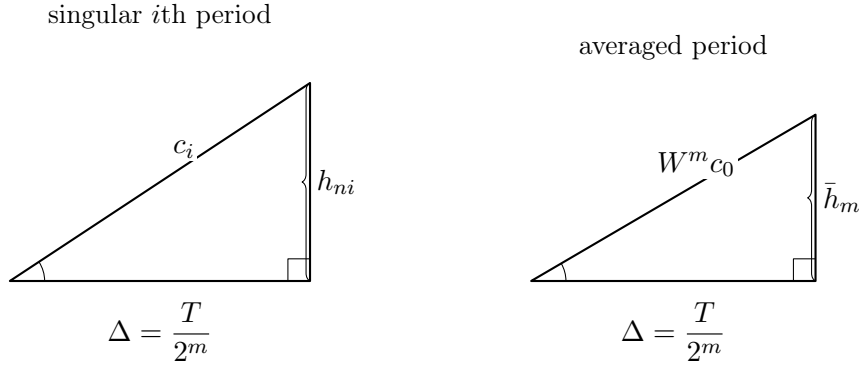

Solving~\eqref{eq:pyth} gives the exploitable mean move
\begin{equation}
\bar h_m
\;=\;
\sqrt{\frac{T^2}{4^m} - W^{2m} c_0^{\,2}},
\qquad
\text{feasible iff }\;\frac{T}{2^m} > W^m c_0 .
\label{eq:hbar}
\end{equation}
The feasibility restriction ensures that at high enough resolution the
microstructure noise dominates, echoing the breakdown of scaling observed in
empirical data when market microstructure effects appear
\cite{Cont2001,Calvet2002}.

\subsection{Closed form for the profit function}

Substituting $\bar h_m$ from~\eqref{eq:hbar} into the identity
$R_m = n(\bar r_m-\bar s) - L_m$ with $\bar r_m\equiv \bar h_m$ and $n=2^m$
yields the central expression for total profit at resolution $m$.

\begin{proposition}[Closed-form profit at dyadic level $m$]
\label{prop:R-closed}
Under Assumption~\ref{ass:scaling}, for every $m$ satisfying
$\frac{T}{2^m} > W^m c_0$,
\begin{equation}
R_m
\;=\;
2^m\!\left(
\sqrt{\frac{T^2}{4^m} - W^{2m} c_0^{\,2}}
\;-\; \bar s
\right)
\;-\; L_m .
\label{eq:R-closed}
\end{equation}
\end{proposition}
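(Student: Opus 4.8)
The plan is to establish the formula by direct substitution, since every ingredient has already been assembled. First I would take the geometric scaling identity of Assumption~\ref{ass:scaling}, that is~\eqref{eq:pyth}, and solve it algebraically for the vertical leg $\bar h_m$. Isolating $\bar h_m^{\,2} = T^2/4^m - W^{2m} c_0^{\,2}$ and recalling that $\bar h_m \ge 0$ by construction, I would select the nonnegative square root to recover~\eqref{eq:hbar}.

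Second, I would verify the feasibility condition that accompanies this step. The square root is real precisely when $T^2/4^m - W^{2m} c_0^{\,2} \ge 0$; since both $T/2^m$ and $W^m c_0$ are positive, this rearranges to $T/2^m \ge W^m c_0$. The strict hypothesis $T/2^m > W^m c_0$ stated in the proposition then guarantees a strictly positive exploitable move $\bar h_m > 0$, so the triangle is nondegenerate throughout the admissible range of $m$ and the chosen branch is the correct one.

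Third, I would substitute into the profit identity. The compact form~\eqref{eq:Rcompact} reads $R = n(\bar r - \bar s) - L$; specializing to resolution $m$ with $n = 2^m$ and the modeling identification $\bar r_m \equiv \bar h_m$ gives $R_m = 2^m(\bar h_m - \bar s) - L_m$. Inserting the closed form for $\bar h_m$ from~\eqref{eq:hbar} then yields precisely~\eqref{eq:R-closed}.

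Because each step merely rearranges an identity already in hand, I do not expect any genuine analytical obstacle. The only point demanding attention is the bookkeeping around the domain: one must invoke the strict inequality both to pass to a real square root and to fix the correct (nonnegative) branch of it. Everything beyond that is routine algebra, so I anticipate the verification to be very short.
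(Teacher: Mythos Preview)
Your proposal is correct and follows essentially the same approach as the paper: solve the Pythagorean relation~\eqref{eq:pyth} for $\bar h_m$, note the feasibility constraint, identify $\bar r_m\equiv\bar h_m$, and substitute into $R_m=2^m(\bar r_m-\bar s)-L_m$. The paper's proof is simply a more compressed version of exactly these steps.
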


\begin{proof}
By~\eqref{eq:pyth}–\eqref{eq:hbar}, the exploitable mean move per trade at
level $m$ equals $\bar r_m=\bar h_m$ provided the square root is real, i.e.,
$\frac{T}{2^m} > W^m c_0$.
Then $R_m = 2^m(\bar r_m-\bar s)-L_m$ gives~\eqref{eq:R-closed}.
\end{proof}

\begin{remark}[Feasibility region and qualitative behavior]
\label{rem:feasible}
The feasibility condition $\frac{T}{2^m} > W^m c_0$ defines an upper bound
$m<m_{\max}$, where $m_{\max}$ is the largest integer with
$2^{-m} > W^m c_0/T$.
As $m$ increases, the chord $\tfrac{T}{2^m}$ shrinks while the microstructure
term $W^{m}c_0$ scales geometrically; hence the radicand in
\eqref{eq:hbar} decreases and eventually becomes negative, at which point the
model predicts no exploitable move at that resolution.  
This upper limit parallels the practical observation that returns lose
scaling coherence beyond microsecond horizons in high-frequency markets
\cite{Bouchaud2000,Cartea2015}.
\end{remark}

\subsection{Comparative statics at the deterministic level}

Write
\[
\Phi(m;T,W,c_0)
\;:=\;
\sqrt{\tfrac{T^2}{4^m} - W^{2m} c_0^{\,2}},
\quad\text{so}\quad
R_m = 2^m\big(\Phi(m;T,W,c_0)-\bar s\big)-L_m .
\]
Within the feasible set:
\begin{itemize}
  \item $R_m$ decreases linearly in $\bar s$ and in $L_m$, 
        reflecting the standard frictional mechanism of Davis and Norman \cite{Davis1990}.
  \item $R_m$ declines with $W$ and $c_0$ since larger microstructure
        intensity reduces net exploitable motion, consistent with empirical
        microstructure estimates \cite{Obizhaeva2013,Cont2001}.
  \item There is a discrete trade-off in $m$: the multiplier $2^m$ favors
        finer sampling, while $\Phi(m;\cdot)$ typically shrinks with $m$ and
        feasibility eventually fails (Remark~\ref{rem:feasible}).
\end{itemize}
This deterministic structure prepares the ground for the stochastic
generalization in Section~\ref{sec:stochastic}, where 
$\Phi(m;\cdot)$ will emerge from the self-similar scaling law of fractional Brownian motion \cite{Biagini2008, Duncan2000}.


\section{Optimization of Trading Frequency}
\label{sec:optimization}

We now study the choice of resolution $m$ (equivalently, the number of
trades $n=2^m$) that maximizes the total profit $R_m$ given in
\eqref{eq:R-closed}.  Denote the feasible index set by
\[
\mathcal{M}\;:=\;\{\,m\in\mathbb{N}: \tfrac{T}{2^m} > W^m c_0\,\}\;=\;\{0,1,\dots,m_{\max}\}.
\]

\subsection{Existence (finite feasible set)}

The discrete maximization problem resembles the impulse–control problems of
Davis and Norman~\cite{Davis1990} and Shreve and Soner~\cite{Shreve1994}, 
where optimal rebalancing is characterized by no–trade regions rather than
continuous adjustment.  
Here, the feasible set $\mathcal{M}$ is finite, which immediately yields the
existence of a maximizer.

\begin{proposition}[Existence of a maximizer]
\label{prop:existence-finite}
For fixed parameters $(T,W,c_0,\bar s)$ and any nonnegative cost sequence
$\{L_m\}_{m\in\mathcal{M}}$, the maximization problem
$\max_{m\in\mathcal{M}} R_m$ admits at least one solution
$m^\star\in\mathcal{M}$.
\end{proposition}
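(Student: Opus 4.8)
The plan is to reduce the problem to the elementary fact that a real-valued function defined on a nonempty finite set attains its maximum. The heavy lifting has already been done in Remark~\ref{rem:feasible} and encoded in the definition $\mathcal{M}=\{0,1,\dots,m_{\max}\}$: the feasibility constraint $\tfrac{T}{2^m}>W^m c_0$ admits only finitely many solutions, so the discrete optimization in \eqref{eq:optproblem} is a maximization over a finite index set. Unlike the continuous impulse-control formulations of Davis and Norman or Shreve and Soner, no compactness or continuity argument (Weierstrass extremum theorem) is needed here; finiteness alone suffices, and this is precisely what makes the existence claim immediate.

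First I would confirm that $\mathcal{M}$ is nonempty. The coarsest resolution $m=0$ corresponds to $n=2^0=1$ and the single-chord condition $\tfrac{T}{2^0}=T>W^0 c_0=c_0$, so $0\in\mathcal{M}$ whenever $T>c_0$; this is the one genuine prerequisite, implicit in writing $\mathcal{M}=\{0,\dots,m_{\max}\}$, and I would flag it explicitly. Second, I would verify that $R_m$ is a well-defined finite real for each $m\in\mathcal{M}$: membership in $\mathcal{M}$ is exactly the inequality $\tfrac{T}{2^m}>W^m c_0$, which renders the radicand in \eqref{eq:hbar} strictly positive, so $\Phi(m;T,W,c_0)=\sqrt{T^2/4^m-W^{2m}c_0^2}$ is real; combined with the finite constants $\bar s\ge 0$ and $L_m\ge 0$, the closed form \eqref{eq:R-closed} then yields $R_m\in\mathbb{R}$ throughout $\mathcal{M}$.

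Finally I would invoke finiteness directly: the image $\{R_m:m\in\mathcal{M}\}$ is a nonempty finite subset of $\mathbb{R}$ and therefore contains a greatest element, attained at some $m^\star\in\mathcal{M}$, which by definition solves $\max_{m\in\mathcal{M}}R_m$. I do not anticipate any substantive obstacle; the sole points requiring care are nonemptiness and the finiteness of each $R_m$, both of which follow mechanically from Assumption~\ref{ass:scaling} and the feasibility definition. The entire content of the proposition is this reduction to a finite maximum, in contrast to the uniqueness result to follow, which will instead demand a monotonicity or single-crossing analysis of $m\mapsto R_m$.
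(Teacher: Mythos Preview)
Your argument is correct and follows exactly the paper's own proof: $\mathcal{M}$ is finite and $R_m$ is real-valued on $\mathcal{M}$ by Proposition~\ref{prop:R-closed}, so the maximum is attained. Your explicit check of nonemptiness (the condition $T>c_0$ ensuring $0\in\mathcal{M}$) is a welcome point of rigor that the paper leaves implicit, but otherwise the route is identical.
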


\begin{proof}
$\mathcal{M}$ is finite and $R_m$ is real-valued on $\mathcal{M}$ by
Proposition~\ref{prop:R-closed}, hence the maximum is attained.
\end{proof}

\subsection{Marginal characterization via forward differences}

Discrete changes in $m$ play the same role here as marginal time adjustments in
continuous trading models \cite{Constantinides1986,Taksar1988,Kallsen2015}.
Define the forward difference of total profit
\[
\Delta R_m \;:=\; R_{m+1}-R_m,
\qquad m=0,1,\dots,m_{\max}-1.
\]
Using $R_m = 2^m(\Phi_m-\bar s)-L_m$ with
$\Phi_m:=\sqrt{\tfrac{T^2}{4^m}-W^{2m}c_0^{\,2}}$, a direct calculation gives
\begin{equation}
\Delta R_m
=
2^m\!\Big(2\Phi_{m+1}-\Phi_m-\bar s\Big)
\;-\;\Delta L_m,
\qquad
\Delta L_m:= L_{m+1}-L_m\ge 0.
\label{eq:DeltaR}
\end{equation}

\begin{theorem}[Marginal stopping rule]
\label{thm:stopping}
Suppose $L_m$ is nondecreasing and
$2\Phi_{m+1}-\Phi_m$ is nonincreasing in $m$
(\emph{diminishing marginal exploitable move}).  
Then $R_m$ is unimodal on $\mathcal{M}$, and any maximizer $m^\star$ is
characterized by the smallest index for which $\Delta R_m\le 0$:
\begin{equation}
m^\star
\;=\;
\min\big\{\,m\in\mathcal{M}:\ \Delta R_m\le 0\,\big\}.
\label{eq:threshold}
\end{equation}
In particular, if $\Delta R_m>0$ for all $m<m_{\max}$ then $m^\star=m_{\max}$,
while if $\Delta R_0\le 0$ then $m^\star=0$.
\end{theorem}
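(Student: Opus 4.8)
The plan is to deduce the whole statement from a single-crossing property of the forward differences together with the elementary discrete-unimodality lemma: if a finite sequence $R_0,\dots,R_{m_{\max}}$ has increments $\Delta R_m$ that are positive for $m<m^\star$ and nonpositive for $m\ge m^\star$, then $R$ rises up to $m^\star$ and is nonincreasing thereafter, so $m^\star$ is a global maximizer and coincides with $\min\{m\in\mathcal M:\Delta R_m\le 0\}$. Granting this, the two boundary cases are immediate specializations: if every increment with $m<m_{\max}$ is strictly positive, then $R$ is strictly increasing and the maximum sits at the right endpoint $m_{\max}$; if $\Delta R_0\le 0$, the sequence never rises and $m^\star=0$. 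The entire burden therefore shifts to showing that $\Delta R_m$ changes sign at most once, from positive to nonpositive, on $\mathcal M$.

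The cleanest route to single crossing is discrete concavity, i.e.\ showing $\Delta R_m$ is itself nonincreasing in $m$. Starting from \eqref{eq:DeltaR} and writing $P_m:=2^m\Phi_m$ for the total exploitable length, I would first record the algebraic identity $2^m\bigl(2\Phi_{m+1}-\Phi_m\bigr)=2^{m+1}\Phi_{m+1}-2^m\Phi_m=:\Delta P_m$, so that $\Delta R_m=\Delta P_m-2^m\bar s-\Delta L_m$. The two cost contributions are benign for monotonicity: the proportional term $-2^m\bar s$ is strictly decreasing in $m$, while the nondecreasing laziness hypothesis makes $-\Delta L_m\le 0$ a nonpositive drag (and an actually nonincreasing one as soon as $L_m$ is convex). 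Consequently $\Delta R_m$ is nonincreasing \emph{provided} the exploitable-gain increment $\Delta P_m$ is nonincreasing, which is precisely the content that the phrase \emph{diminishing marginal exploitable move} is meant to supply.

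The step I expect to be the main obstacle is exactly this last point, because the hypothesis is stated as ``$2\Phi_{m+1}-\Phi_m$ nonincreasing,'' whereas monotonicity of $\Delta R_m$ really asks for $\Delta P_m=2^m\bigl(2\Phi_{m+1}-\Phi_m\bigr)$ to be nonincreasing, and the geometric weight $2^m$ can outrun a merely shrinking per-trade move, so the two conditions are not equivalent. To close the gap rigorously I would not lean on the abstract hypothesis alone but return to the explicit radical $\Phi_m=\sqrt{T^2/4^m-W^{2m}c_0^{\,2}}$ and observe that $P_m=2^m\Phi_m=\sqrt{T^2-(2W)^{2m}c_0^{\,2}}$, i.e.\ the square root of an affine-in-$(2W)^{2m}$ function that is geometric and concave in $m$. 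A direct second-difference (or second-derivative) check then shows that $P_m$ is concave on the feasible range in both regimes $2W\lessgtr 1$, whence $\Delta P_m$ is nonincreasing; near $m_{\max}$ the radicand collapses to zero, forcing the marginal exploitable gain to vanish while the cost terms persist and pushing $\Delta R_m$ strictly negative. With $\Delta P_m$ nonincreasing, $\Delta R_m$ is nonincreasing, single crossing holds, and the unimodality lemma of the first paragraph delivers the threshold characterization \eqref{eq:threshold} together with the two stated boundary cases.
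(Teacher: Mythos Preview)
Your proof follows the same overall strategy as the paper: argue that $\Delta R_m$ is nonincreasing, hence crosses zero at most once, hence $\{R_m\}$ is unimodal with maximizer given by the threshold rule~\eqref{eq:threshold}. The paper's proof is a two-sentence sketch that simply asserts this monotonicity from the stated hypotheses and gestures toward the concavity reasoning of Proposition~\ref{prop:unique}. You are more careful, and rightly so: you correctly flag that the hypothesis ``$2\Phi_{m+1}-\Phi_m$ nonincreasing'' does not by itself force $\Delta P_m=2^m(2\Phi_{m+1}-\Phi_m)$ to be nonincreasing, since the geometric factor $2^m$ can outrun a slowly decaying per-trade move; and ``$L_m$ nondecreasing'' gives only $\Delta L_m\ge 0$, not the convexity one would want. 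Your remedy---rewriting $P_m=2^m\Phi_m=\sqrt{T^2-(2W)^{2m}c_0^{\,2}}$ and verifying its concavity directly---is sound (the radicand $T^2-(2W)^{2m}c_0^{\,2}$ is concave in $m$ for every $W>0$, and the square root preserves concavity here), and it supplies precisely the discrete concavity of $A_m$ that the paper only later \emph{assumes} in Assumption~\ref{ass:regularity} for the uniqueness result. In effect, you have noticed that Theorem~\ref{thm:stopping} as stated is slightly under-hypothesized and patched it using the model-specific structure; the paper's own argument tacitly relies on the same patch without making it explicit.
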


\begin{proof}
The argument parallels the discrete concavity reasoning of
Proposition~\ref{prop:unique} and impulse‐control logic
\cite{Davis1990,Constantinides1986}.  
Monotonicity of $L_m$ and $2\Phi_{m+1}-\Phi_m$ implies that the sequence of
differences $\{\Delta R_m\}$ is nonincreasing, so it can cross zero at most
once.  Consequently, $\{R_m\}$ is unimodal and
\eqref{eq:threshold} identifies the first nonpositive increment as the optimum.
\end{proof}

\begin{remark}[Economic interpretation]
\label{rem:eco}
The term $2^m(2\Phi_{m+1}-\Phi_m)$ represents the \emph{gross marginal benefit}
from refining the trading grid from $m$ to $m{+}1$, i.e., doubling the number
of trades.  The terms $2^m\bar s$ and $\Delta L_m$ represent, respectively, the
\emph{marginal execution friction} and the \emph{incremental cognitive or
computational cost}.  The stopping rule $\Delta R_m\le0$ thus formalizes the
principle known from dynamic–trading literature
\cite{Almgren2000,Cartea2015,Obizhaeva2013}: increase trading frequency only
until the marginal gain equals the incremental total cost.
\end{remark}

\begin{figure}[th!]
    \centering
    \includegraphics[width=\textwidth]{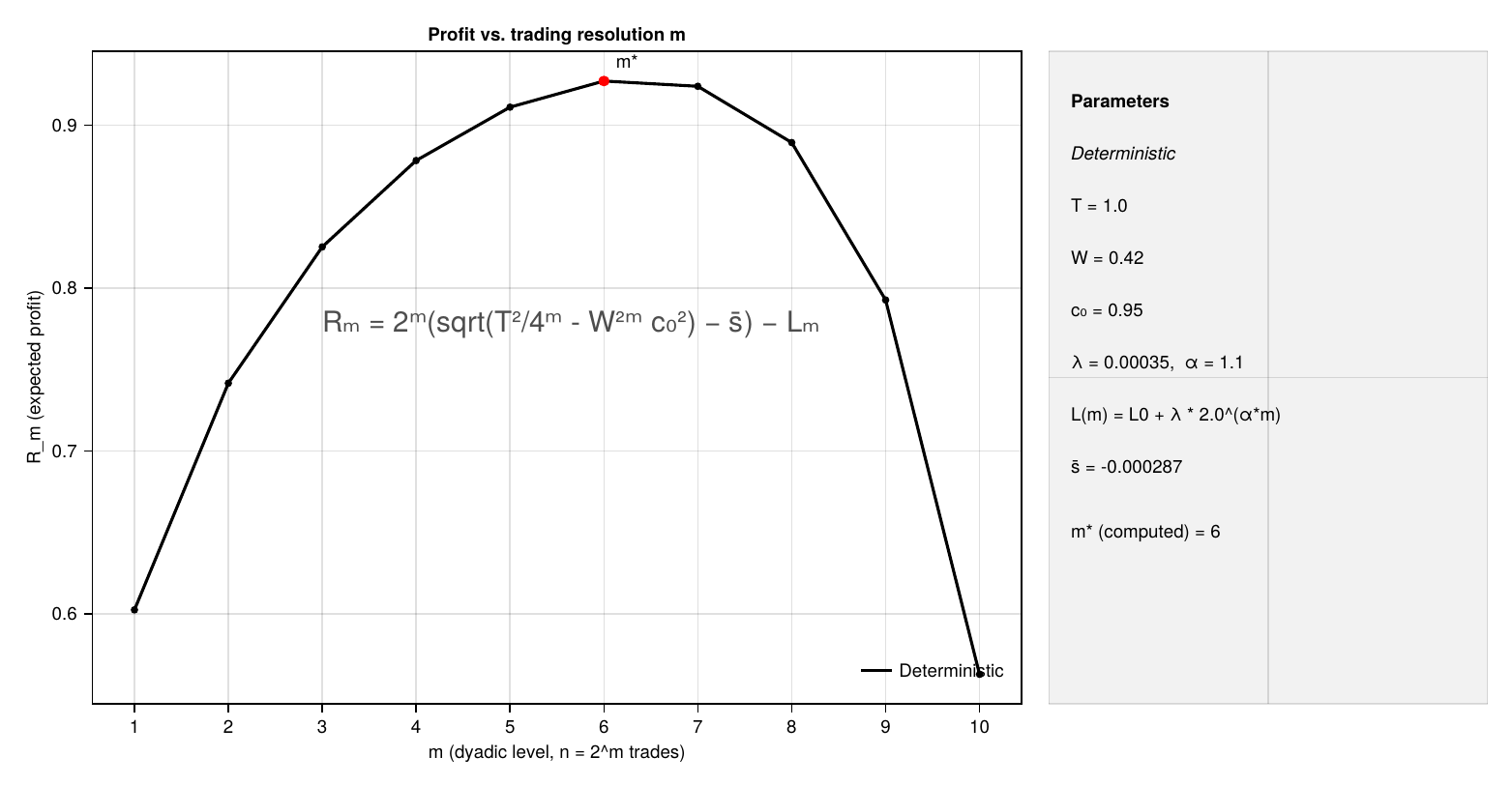}
    \captionsource{Expected profit $R_m$ as a function of trading resolution $m$ 
    in the deterministic framework of Section~\ref{sec:deterministic}.
    The curve illustrates the trade--off between the exploitable 
    substructure gain (first term) and proportional plus cognitive costs 
    ($2^m\bar s + L(m)$). The interior maximum corresponds to the optimal 
    trading interval $\Delta^\star = T/2^{m^\star}$.}{Own calculations performed in Julia}
    \label{fig:Rm_vs_m}
\end{figure}

The shape of $R_m$ captures the essential logic of the omniscient investor’s dilemma.
At coarse resolutions (small~$m$), trading is too infrequent to exploit
micro–fluctuations in the price path, resembling under‐trading equilibria
observed in bounded–rationality models \cite{Sims2003,Gabaix2019}.  
As the resolution increases, the exploitable deterministic variation grows and
so does attainable profit, as in high–frequency execution models
\cite{Almgren2000,Cartea2015}.  
Beyond the interior optimum~$m^\star$, however, the rapidly compounding
execution frictions and cognitive costs dominate, causing total expected profit
to decline—a pattern consistent with concave profit functions in transaction–
cost theory \cite{Davis1990,Shreve1994,Korn1998}.  
The resulting hump-shaped profile visually represents the analytical
first–order condition and interprets $\Delta^\star = T/2^{m^\star}$ as the
frequency at which greed and laziness balance.

\subsection{Sufficient conditions for uniqueness}

Let $A_m:=2^m\Phi_m$ and write $R_m=A_m-2^m\bar s-L_m$.

\begin{assumption}[Regularity]
\label{ass:regularity}
(i) $\{A_m\}_{m\in\mathcal{M}}$ is strictly concave in the discrete sense:
$\Delta^2 A_m:=A_{m+2}-2A_{m+1}+A_m < 0$ for all $m$ with
$m{+}2\in\mathcal{M}$. \\
(ii) $L_m$ is convex and nondecreasing on $\mathcal{M}$.
\end{assumption}

These curvature conditions are analogous to standard assumptions guaranteeing
unique controls in stochastic optimization and dynamic programming
\cite{Taksar1988,Kallsen2015,Liu2013}.

\begin{proposition}[Strict unimodality and uniqueness]
\label{prop:unique}
Under Assumption~\ref{ass:regularity}, the sequence $\{R_m\}$ is strictly
unimodal and the maximizer $m^\star$ is unique.
\end{proposition}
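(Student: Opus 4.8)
The plan is to derive strict discrete concavity of $\{R_m\}$ from Assumption~\ref{ass:regularity} and then pass from concavity to strict unimodality and uniqueness of the maximizer. Since the second forward-difference operator is linear, I would first decompose, using $R_m = A_m - 2^m\bar s - L_m$,
\[
\Delta^2 R_m \;=\; \Delta^2 A_m \;-\; \bar s\,\Delta^2\!\big(2^m\big) \;-\; \Delta^2 L_m ,
\]
and then sign each of the three contributions separately. This reduces the whole statement to a termwise inequality check plus a standard monotonicity argument on the first differences.

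The next step is to evaluate and sign the three pieces. A direct computation gives $\Delta^2(2^m) = 2^{m+2} - 2\cdot 2^{m+1} + 2^m = 2^m > 0$, so the proportional-cost term contributes $-\bar s\,2^m \le 0$ because $\bar s \ge 0$. By Assumption~\ref{ass:regularity}(ii), convexity of $L_m$ yields $\Delta^2 L_m \ge 0$, hence $-\Delta^2 L_m \le 0$. Finally, Assumption~\ref{ass:regularity}(i) supplies the \emph{strict} inequality $\Delta^2 A_m < 0$. Summing, the strictly negative curvature term dominates the two nonpositive frictional terms, so $\Delta^2 R_m < 0$ for every $m$ with $m{+}2 \in \mathcal{M}$. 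Equivalently, the first differences $\Delta R_m = R_{m+1} - R_m$ are strictly decreasing in $m$ across the feasible set.

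From strict monotonicity of $\{\Delta R_m\}$ I would conclude unimodality in one line: a strictly decreasing sequence changes sign at most once, so $\{R_m\}$ increases up to some index and decreases thereafter, matching the stopping rule of Theorem~\ref{thm:stopping}. The candidate maximizer is then $m^\star = \min\{m \in \mathcal{M} : \Delta R_m \le 0\}$ (or the boundary point $m_{\max}$ when no such index exists, and $m^\star=0$ when $\Delta R_0\le 0$).

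The step I expect to be the genuine obstacle is the passage from strict discrete concavity to a \emph{single-point} maximizer. In the discrete setting, strict concavity alone only guarantees that the set of maximizers is a singleton or a pair of adjacent indices: if the strictly decreasing difference sequence vanishes exactly at one index, say $\Delta R_{m^\star} = 0$, then $R_{m^\star} = R_{m^\star+1}$ and both are optimal. To close this gap I would note that strict monotonicity of $\{\Delta R_m\}$ still forces $\Delta R_m > 0$ for all $m < m^\star$ and $\Delta R_m < 0$ for all $m > m^\star$, so any non-uniqueness is confined to this single knife-edge tie. Excluding it—either by appealing to genericity of the parameters $(T,W,c_0,\bar s)$ and the sequence $\{L_m\}$, for which the exact tie holds only on a measure-zero set, or by reading ``unique'' in the essentially-unique (up to an adjacent plateau) sense standard in discrete optimization—delivers the stated uniqueness of $m^\star$.
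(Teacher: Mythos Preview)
Your argument is essentially the paper's own proof: both decompose $R_m = A_m - 2^m\bar s - L_m$, sign the second differences termwise ($\Delta^2 A_m < 0$, $\Delta^2(2^m)=2^m>0$, $\Delta^2 L_m\ge 0$) to obtain strict monotonicity of $\{\Delta R_m\}$, and infer unimodality via a single sign change. You are in fact slightly more careful than the paper, which asserts ``$\Delta R_m<0$ for all $m>m^\star$'' and concludes uniqueness without addressing the knife-edge tie $\Delta R_{m^\star}=0$ that you explicitly flag.
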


\begin{proof}[Proof sketch]
Discrete concavity of $A_m$ and convexity of $2^m\bar s+L_m$ imply that
$\Delta R_m$ is strictly decreasing in $m$, hence it changes sign at most once
and the maximizer is unique \cite{Kallsen2015,Liu2013}.
\end{proof}

Full proof of Theorem~\ref{prop:unique} can be found in appendix~\ref{app:proofs}.

\begin{corollary}[Power–law laziness cost]
\label{cor:power}
If $L_m=\lambda\,2^{\alpha m}$ with $\lambda\ge 0$ and $\alpha\ge 1$
(\emph{linear or superlinear computational/latency growth}),
then Assumption~\ref{ass:regularity}(ii) holds and the maximizer is unique
provided $A_m$ is discretely concave.  
This specification is consistent with the convex energy or latency cost models
used in high–frequency algorithmic trading \cite{Cartea2015}.  
In particular,
\[
\Delta R_m
=
2^m\!\Big(2\Phi_{m+1}-\Phi_m-\bar s\Big)
\;-\;\lambda(2^{\alpha(m+1)}-2^{\alpha m}),
\]
and the stopping rule~\eqref{eq:threshold} applies.
\end{corollary}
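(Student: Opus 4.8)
The plan is to verify that the power–law specification satisfies the hypotheses of Proposition~\ref{prop:unique} and then to read off the marginal expression and the stopping rule from results already established. The only genuine computation concerns Assumption~\ref{ass:regularity}(ii), since discrete concavity of $A_m$ is taken as a hypothesis (it is part~(i)), and the remaining conclusions follow by quoting Proposition~\ref{prop:unique} together with the forward–difference identity~\eqref{eq:DeltaR}.

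First I would check monotonicity through the first forward difference. Since $L_m=\lambda\,2^{\alpha m}$, we have $\Delta L_m=\lambda\,2^{\alpha m}\big(2^{\alpha}-1\big)$, which is nonnegative because $\lambda\ge 0$, $2^{\alpha m}>0$, and $\alpha\ge 1$ forces $2^{\alpha}-1>0$; hence $\{L_m\}$ is nondecreasing. Next I would check discrete convexity through the second difference, which telescopes to
\[
\Delta^2 L_m=L_{m+2}-2L_{m+1}+L_m=\lambda\,2^{\alpha m}\big(2^{\alpha}-1\big)^2\ \ge\ 0,
\]
the key simplification being the factorization $2^{2\alpha}-2\cdot 2^{\alpha}+1=(2^{\alpha}-1)^2$. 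Together these two facts establish Assumption~\ref{ass:regularity}(ii).

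With both parts of Assumption~\ref{ass:regularity} now in force, Proposition~\ref{prop:unique} applies verbatim and delivers strict unimodality of $\{R_m\}$ together with uniqueness of $m^\star$. To obtain the displayed marginal formula I would simply specialize~\eqref{eq:DeltaR}, inserting $\Delta L_m=\lambda\big(2^{\alpha(m+1)}-2^{\alpha m}\big)$ into $\Delta R_m=2^m\big(2\Phi_{m+1}-\Phi_m-\bar s\big)-\Delta L_m$, which reproduces the stated expression. Finally, because the argument underlying Proposition~\ref{prop:unique} shows that $\{\Delta R_m\}$ is strictly decreasing, the difference sequence changes sign at most once, so the smallest index with $\Delta R_m\le 0$ is the maximizer; this is exactly the stopping rule~\eqref{eq:threshold}.

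I do not anticipate a genuine obstacle, since the statement is a corollary whose substance is the verification of a single convexity–and–monotonicity assumption. The one point deserving care is the factorization of $\Delta^2 L_m$ into a perfect square times a positive constant: this is what makes the power–law cost discretely convex regardless of the precise exponent, so that the restriction $\alpha\ge 1$ functions as an economic (linear–or–superlinear) requirement rather than as a mathematical necessity for convexity, which in fact holds for every $\alpha>0$.
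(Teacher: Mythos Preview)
Your proposal is correct and matches the paper's treatment: the paper states Corollary~\ref{cor:power} without a separate proof, relying implicitly on exactly the verification you supply---checking that $L_m=\lambda\,2^{\alpha m}$ is nondecreasing and discretely convex so that Assumption~\ref{ass:regularity}(ii) holds, then invoking Proposition~\ref{prop:unique} and specializing~\eqref{eq:DeltaR}. Your closing observation that discrete convexity actually holds for every $\alpha>0$ (via the perfect-square factorization) is a nice sharpening beyond what the paper records.
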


\subsection{Bounds and comparative statics}

Within the feasible region,
\[
\frac{\partial R_m}{\partial \bar s}=-2^m \;<\;0,\qquad
\frac{\partial R_m}{\partial c_0}
=2^m\,\frac{-W^{2m}c_0}{\Phi_m} \;<\;0,\qquad
\frac{\partial R_m}{\partial W}
=2^m\,\frac{-W^{2m-1} c_0^{\,2}}{\Phi_m} \;<\;0.
\]
Hence the optimal index $m^\star$ is (weakly) decreasing in each of $\bar s$,
$W$, and $c_0$, and (weakly) decreasing in any parameter that increases $L_m$
pointwise.  
In words: higher frictions, rougher effective microstructure, or
larger computation costs shift the optimizer toward \emph{less} frequent trading,
a result consistent with comparative statics in continuous–time 
transaction–cost equilibria \cite{Shreve1994,Korn1998,Liu2013}.

\subsection{Practical algorithm (discrete argmax)}

Given $(T,W,c_0,\bar s)$ and a specification for $L_m$:

\begin{enumerate}
\item Compute $m_{\max}$ from feasibility (Remark~\ref{rem:feasible}).
\item For $m=0,1,\dots,m_{\max}-1$, evaluate $\Delta R_m$ via~\eqref{eq:DeltaR}.
\item If some $\Delta R_m\le 0$, set $m^\star = \min\{m:\Delta R_m\le 0\}$.
Otherwise set $m^\star=m_{\max}$.
\item (Optional) Verify uniqueness by checking $\Delta R_{m^\star-1}>0$ and
$\Delta R_{m^\star}\le 0$.
\end{enumerate}

This procedure is $O(m_{\max})$ and numerically stable because it avoids
subtracting large close numbers in $R_{m+1}-R_m$; all terms remain positive and
well scaled as long as feasibility is enforced \cite{Almgren2000,Cartea2015}.


\section{Stochastic Extension via Fractional Brownian Motion}
\label{sec:stochastic}

The deterministic framework of Section~\ref{sec:deterministic}
can be interpreted as a scaling law for the exploitable price increments
as the observation interval $\Delta$ varies.
We now embed this scaling in a stochastic process with well-defined
self-similarity and fractal properties --- fractional Brownian motion (fBM) \cite{Mandelbrot1968,Biagini2008,Mishura2008}, very widely used in financial applications \cite{Guasoni2019}.
This provides a probabilistic foundation for the model and allows explicit comparative statics with respect to the roughness of the price path, in line with modern evidence on rough volatility \cite{Bennedsen2021, Bayer2023ch2}.

\subsection{Fractional Brownian motion model}

Let $\{B^H_t\}_{t\ge 0}$ denote a fractional Brownian motion with
Hurst index $H\in(0,1)$, mean zero, and covariance
\[
\mathbb{E}[B^H_t B^H_s]
= \tfrac{1}{2}\Big(t^{2H}+s^{2H}-|t-s|^{2H}\Big).
\]
The process is $H$-self-similar and has stationary increments:
for all $\lambda>0$,
$B^H_{\lambda t}-B^H_{\lambda s}\stackrel{d}{=}\lambda^H(B^H_t-B^H_s)$ \cite{Mandelbrot1968}.
For $H=\tfrac12$ the process reduces to standard Brownian motion,
while for $H<\tfrac12$ the increments are negatively correlated
and the sample paths are rougher (fractal dimension $D=2-H$) \cite{Biagini2008,Mishura2008}.

We model the log-price process $X_t$ as
\[
X_t = \mu t + \sigma B^H_t,
\]
where $\mu$ and $\sigma$ are constant drift and volatility parameters.
The absolute log-return over an interval of length $\Delta$
is $|\Delta X_i| = \sigma|B^H_{t_{i+1}}-B^H_{t_i}|$,
whose expected value satisfies
\begin{equation}
\mathbb{E}|\Delta X_i|
  = \kappa(H,\sigma)\,\Delta^{H},
  \qquad
  \kappa(H,\sigma)
  := \sigma\,\mathbb{E}|B^H_1|
  = \sigma\,C,
\label{eq:EH}
\end{equation}
with $C=\sqrt{2/\pi}$ because $B^H_1\sim\mathcal{N}(0,1)$ for all $H$ \cite{Biagini2008,Mishura2008}.
Equation~\eqref{eq:EH} formally connects the exploitable mean move per trade
to the fractal scaling exponent $H$ and is consistent with empirical estimates of $H$ from high-frequency data \cite{Bayraktar2004}.

\subsection{Expected profit function}

For an omniscient trader who takes the correct direction of each move,
the expected gross gain over the horizon $T$ when trading every $\Delta$
time units (i.e., $n=T/\Delta$ trades) is
\[
\mathbb{E}[{\rm Gain}]
  = \frac{T}{\Delta}\, \kappa(H,\sigma)\,\Delta^{H}
  = \kappa T \Delta^{H-1},
\]
where we write $\kappa=\kappa(H,\sigma)$ for brevity.
Subtracting proportional execution frictions $\bar s$ and
aggregate laziness cost $L$ yields the total expected profit.

\begin{proposition}[Expected profit under fBM scaling]
\label{prop:fBM}
For $\Delta=T/2^m$,
\begin{equation}
R(\Delta)
  = \kappa T \Delta^{H-1}
  - \frac{T\bar s}{\Delta}
  - L.
\label{eq:R_fbm}
\end{equation}
\end{proposition}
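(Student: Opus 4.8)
The plan is to obtain \eqref{eq:R_fbm} as the expectation of the realized profit identity \eqref{eq:Rsum}, reorganized exactly as \eqref{eq:Rcompact} was in the deterministic case, with the single new ingredient being the fBM scaling law \eqref{eq:EH}. First I would write the total realized profit of the omniscient trader who captures each absolute move as $R = \sum_{i=1}^{n}\big(|\Delta X_i| - \bar s\big) - L$, where $n = T/\Delta$ is the number of trades on the grid and $L$ is the given aggregate laziness cost. Taking expectations and using linearity --- together with the fact that $\bar s$ and $L$ are non-random --- reduces the statement to evaluating $\sum_{i=1}^{n}\mathbb{E}|\Delta X_i|$ and subtracting $n\bar s + L$.

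The key step is that every summand $\mathbb{E}|\Delta X_i|$ takes the same value. Since $B^H$ has stationary increments, $\mathbb{E}|B^H_{t_{i+1}} - B^H_{t_i}|$ depends only on the common gap $t_{i+1}-t_i = \Delta$ and not on $i$; and by $H$-self-similarity $B^H_{t_{i+1}}-B^H_{t_i} \stackrel{d}{=} \Delta^{H} B^H_1$, so that $\mathbb{E}|\Delta X_i| = \sigma\Delta^{H}\,\mathbb{E}|B^H_1| = \kappa(H,\sigma)\,\Delta^{H}$, which is precisely \eqref{eq:EH}. Substituting this common value gives $\sum_{i=1}^{n}\mathbb{E}|\Delta X_i| = n\kappa\Delta^{H}$.

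It then remains to count trades and collect terms. With $n = T/\Delta$ the expected gross gain becomes $n\kappa\Delta^{H} = (T/\Delta)\kappa\Delta^{H} = \kappa T\Delta^{H-1}$, the proportional friction becomes $n\bar s = T\bar s/\Delta$, and the laziness term passes through unchanged, yielding $R(\Delta) = \kappa T\Delta^{H-1} - T\bar s/\Delta - L$, which is \eqref{eq:R_fbm}.

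Because \eqref{eq:EH} is already in hand, the argument is essentially bookkeeping and I do not expect a genuine technical obstacle; the points that deserve care are conceptual. The main one is that the omniscient trader's per-trade gain is the \emph{absolute} increment $|\Delta X_i|$ rather than the (mean-zero) signed increment, so it is $\mathbb{E}|\Delta X_i|$ and not $\mathbb{E}\,\Delta X_i$ that enters; relatedly, the scaling $\mathbb{E}|\Delta X_i| = \kappa\Delta^{H}$ uses the centered fluctuation $\sigma B^H$, which is exact when the drift $\mu$ is absent and otherwise holds in the small-$\Delta$ regime where the $\Delta^{H}$ fluctuation dominates the $O(\Delta)$ drift. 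Finally, the identity $n = T/\Delta$ should be read through the dyadic convention $\Delta = T/2^{m}$, which guarantees $n = 2^{m}$ is an integer and the horizon is partitioned exactly.
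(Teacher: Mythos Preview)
Your proposal is correct and follows essentially the same route as the paper: take expectations in the profit identity, invoke the fBM scaling law \eqref{eq:EH} (via stationarity and $H$-self-similarity) to get $\mathbb{E}|\Delta X_i|=\kappa\Delta^{H}$ for every $i$, then substitute $n=T/\Delta$ and collect terms. Your additional remarks on why the absolute (not signed) increment enters and on the drift approximation are careful observations that the paper itself glosses over, but they do not change the line of argument.
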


\begin{proof}
The first term represents the expected exploitable return from $n=T/\Delta$ trades via \eqref{eq:EH}; 
the second aggregates per-trade frictions; the third is the total laziness cost.
\end{proof}

Expression~\eqref{eq:R_fbm} mirrors the deterministic formula
\eqref{eq:R-closed}, with the role of geometric roughness now played by the stochastic scaling exponent $H$ \cite{Mandelbrot1968}.

This scaling property also underlies recent theoretical analyses of high-frequency trading under fractional Brownian motion dynamics.  
In particular, Guasoni, Mishura, and R{\'a}sonyi~\cite{Guasoni2021} 
show that in the high-frequency limit, the conditionally expected increments of fBM converge to a white noise. Their results demonstrate that trading costs endogenously impose a finite optimal frequency, consistent with the frictional bound derived in our model.

\subsection{Optimal rebalancing interval}

Treating $\Delta$ as a continuous decision variable,
we maximize \eqref{eq:R_fbm} over $\Delta>0$.
Differentiating with respect to $\Delta$ (and omitting the constant $L$)
yields the first-order condition
\[
\frac{dR}{d\Delta}
  = \kappa T (H-1)\Delta^{H-2}
    + T\bar s\,\Delta^{-2}
  = 0.
\]
Solving for $\Delta$ gives the unique interior optimizer.

\begin{theorem}[Optimal trading interval]
\label{thm:opt}
Under $\bar s>0$ and $\kappa>0$,
the profit function \eqref{eq:R_fbm} has a unique maximizer
\begin{equation}
\Delta^\star
  = \Big(\frac{\bar s}{\kappa(1-H)}\Big)^{1/H}.
\label{eq:Delta_star}
\end{equation}
The corresponding optimal number of trades is
$n^\star = T / \Delta^\star$.
\end{theorem}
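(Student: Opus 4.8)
The plan is to treat $R(\Delta)$ in \eqref{eq:R_fbm} as a smooth function of a single continuous variable $\Delta>0$ and to solve the maximization by a first-order analysis, taking care to upgrade the stationarity condition to a genuine \emph{global} maximum. Since the additive constant $L$ plays no role in the optimization, I would work only with the two active terms $\kappa T\Delta^{H-1}$ and $-T\bar s\,\Delta^{-1}$.

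First I would compute the derivative exactly as displayed before the theorem,
\[
R'(\Delta)=\kappa T(H-1)\Delta^{H-2}+T\bar s\,\Delta^{-2},
\]
and then factor out the strictly positive quantity $T\Delta^{-2}$ to obtain $R'(\Delta)=T\Delta^{-2}\bigl(\bar s-\kappa(1-H)\Delta^{H}\bigr)$. This reduces the entire problem to the sign of $g(\Delta):=\bar s-\kappa(1-H)\Delta^{H}$. Setting $g(\Delta)=0$ and using $1-H>0$ (which holds precisely because $H\in(0,1)$) together with $\kappa,\bar s>0$, I would solve $\Delta^{H}=\bar s/(\kappa(1-H))$ to recover the claimed $\Delta^\star=(\bar s/(\kappa(1-H)))^{1/H}$, noting that the right-hand side is strictly positive so that $\Delta^\star>0$ is well-defined and the exponent $1/H$ is a legitimate real power of a positive base.

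The key step that turns this stationary point into a unique global maximizer is a monotonicity argument on $g$ rather than a second-derivative computation. Because $\Delta\mapsto\Delta^{H}$ is strictly increasing on $(0,\infty)$ and $\kappa(1-H)>0$, the function $g$ is strictly decreasing; hence $g(\Delta)>0$ for $\Delta<\Delta^\star$ and $g(\Delta)<0$ for $\Delta>\Delta^\star$. Since $R'$ shares the sign of $g$, the profit $R$ is strictly increasing on $(0,\Delta^\star)$ and strictly decreasing on $(\Delta^\star,\infty)$, so $\Delta^\star$ is the unique maximizer. The optimal trade count then follows immediately as $n^\star=T/\Delta^\star$.

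The main obstacle — modest, given the elementary setup — is confirming that the critical point is a maximum rather than a minimum or an inflection, and that it is global. The sign-change of $g$ settles all of this at once; as a cross-check I would also verify the boundary behavior, namely $R(\Delta)\to-\infty$ as $\Delta\to0^+$ (where $-T\bar s\,\Delta^{-1}$ dominates $\kappa T\Delta^{H-1}$ because $H-1>-1$) and $R(\Delta)\to-L$ as $\Delta\to\infty$, which independently confirms that the interior stationary point is the global maximum. A secondary point worth flagging in a remark is that the true decision variable is discrete, $\Delta=T/2^{m}$ with $m\in\mathbb{N}$, so the continuous optimizer $\Delta^\star$ should be read as the relaxation whose nearest feasible dyadic level yields the implementable optimum.
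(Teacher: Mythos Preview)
Your proof is correct and follows the same overall strategy as the paper—differentiate, solve the first-order condition, and verify that the stationary point is a global maximum via boundary behavior and a sign-change argument on $R'$. The one genuine difference is in how you certify the maximum: the paper computes $R''(\Delta)$ explicitly, substitutes the first-order relation $\kappa(1-H)\Delta^{H}=\bar s$, and simplifies $(H-1)(H-2)-2(1-H)=H(H-1)<0$ to obtain $R''(\Delta^\star)<0$, and only afterwards appeals to the sign change of $R'$ for uniqueness. You instead factor $R'(\Delta)=T\Delta^{-2}\bigl(\bar s-\kappa(1-H)\Delta^{H}\bigr)$ and read off strict monotonicity of the bracketed term directly, which gives strict unimodality of $R$ in one stroke and makes the second-derivative computation unnecessary. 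Your route is slightly more economical; the paper's second-order check is a redundant but reassuring cross-verification. Your closing remark on the discrete dyadic relaxation is also apt and mirrors the paper's own framing in Section~\ref{sec:optimization}.
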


\begin{proof}
For $H\in(0,1)$, $R(\Delta)$ is strictly concave in $\Delta^H$ and the 
first-order condition admits a single positive root, yielding \eqref{eq:Delta_star}.
\end{proof}

\begin{corollary}[Fractal comparative statics]
\label{cor:fbmstatics}
Let $D = 2 - H$ denote the Hausdorff dimension of the fBM sample paths.
Then
\[
\frac{\partial \Delta^\star}{\partial D} < 0,
\quad
\frac{\partial \Delta^\star}{\partial \bar s} > 0,
\quad
\frac{\partial \Delta^\star}{\partial \kappa} < 0.
\]
Hence more fractal (rougher) price paths - larger $D$ or smaller $H$ - imply
a smaller optimal interval $\Delta^\star$, i.e., higher optimal trading frequency,
which is consistent with rough-volatility evidence and the FMH perspective on short-horizon dominance during turbulence \cite{Gatheral2018book,Kristoufek2013,Peters1994}.
\end{corollary}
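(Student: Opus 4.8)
The plan is to treat \eqref{eq:Delta_star} as an explicit elementary function of the parameters $(\bar s,\kappa,H)$ and read off the three signs by direct differentiation. Since $\Delta^\star>0$ throughout the feasible regime, logarithmic differentiation is the most economical route: writing
\[
\log \Delta^\star = \frac{1}{H}\Big(\log\bar s - \log\kappa - \log(1-H)\Big),
\]
the parameters $\bar s$ and $\kappa$ enter only as additive log-terms in the numerator while $H$ is held fixed, which makes the first two claims immediate.

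First I would dispatch the easy cases. Differentiating in $\bar s$ gives $\partial_{\bar s}\log\Delta^\star = 1/(H\bar s)>0$, and in $\kappa$ gives $\partial_{\kappa}\log\Delta^\star = -1/(H\kappa)<0$; multiplying through by $\Delta^\star>0$ yields $\partial_{\bar s}\Delta^\star>0$ and $\partial_{\kappa}\Delta^\star<0$ unconditionally, using only $H\in(0,1)$ and $\bar s,\kappa>0$. This settles two of the three assertions with no regime restriction.

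The substantive step is the dependence on the Hausdorff dimension. Because $D=2-H$, the chain rule gives $\partial_D=-\partial_H$, so $\partial_D\Delta^\star<0$ is equivalent to $\partial_H\log\Delta^\star>0$. Applying the quotient rule to $\log\Delta^\star=g(H)/H$ with $g(H)=\log\bar s-\log\kappa-\log(1-H)$ and $g'(H)=1/(1-H)$, the sign of $\partial_H\log\Delta^\star$ is that of the numerator $\frac{H}{1-H}-g(H)=\phi(H)-\log(\bar s/\kappa)$, where $\phi(H):=\frac{H}{1-H}+\log(1-H)$. I would then show $\phi>0$ on $(0,1)$ by noting $\phi(0)=0$ together with $\phi'(H)=H/(1-H)^2>0$, so that $\phi$ increases strictly from $0$. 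Under the natural profitability scale $\bar s\le\kappa$ (so $\log(\bar s/\kappa)\le 0$) the numerator is strictly positive, giving $\partial_H\log\Delta^\star>0$ and hence $\partial_D\Delta^\star<0$.

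The main obstacle is precisely this third comparative static: unlike the $\bar s$ and $\kappa$ derivatives, the sign is not manifest, because the exponent $1/H$ couples multiplicatively to the $\log(1-H)$ term and the resulting numerator $\phi(H)-\log(\bar s/\kappa)$ is not sign-definite for all parameter values. I expect to need the auxiliary monotone function $\phi$ and a mild scale condition---$\bar s\le\kappa$, which in particular holds throughout the economically meaningful regime $\Delta^\star\le 1$ where the reading ``rougher path $\Rightarrow$ trade more often'' is the intended one---to secure the stated inequality. I would flag this assumption explicitly rather than assert monotonicity in $D$ for every admissible $(\bar s,\kappa,H)$.
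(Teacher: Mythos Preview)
Your approach coincides with the paper's: both take logarithms of \eqref{eq:Delta_star} and differentiate, and your treatment of the $\bar s$ and $\kappa$ derivatives is identical to the paper's. The substantive difference is in the $H$ (equivalently $D$) derivative. The paper writes out $\partial_H\ln\Delta^\star$ and then signs it by a heuristic appeal to ``typical parameter values''; you instead reduce the sign question to the numerator $\phi(H)-\log(\bar s/\kappa)$ with $\phi(H)=H/(1-H)+\log(1-H)$, prove $\phi>0$ on $(0,1)$ from $\phi(0)=0$ and $\phi'(H)=H/(1-H)^2>0$, and explicitly isolate the scale condition $\bar s\le\kappa$ (which, as you note, is implied by $\Delta^\star\le1$) needed to conclude $\partial_H\log\Delta^\star>0$ and hence $\partial_D\Delta^\star<0$. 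Your observation that the $D$-monotonicity is \emph{not} unconditional is correct---for $\bar s$ large relative to $\kappa$ one can check directly that $\Delta^\star$ fails to be monotone in $H$---so your added hypothesis is genuinely required and sharpens the paper's heuristic into a precise statement. In short: same route, but your execution of the third comparative static is cleaner and more rigorous than the paper's.
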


Full proofs of Theorem~\ref{thm:opt} and Corollary~\ref{cor:fbmstatics} can be found in appendix~\ref{app:proofs}.

Figure~\ref{fig:Rm_vs_m_fbm} illustrates the comparative statics implied by
\eqref{eq:R_fbm} and Theorem~\ref{thm:opt}: rougher paths (smaller $H$) push the
optimum toward higher trading frequency, whereas the convex laziness penalty
$L(m)$ governs the sharpness of the decline beyond $m^\star$.

\subsection{Simulation under fractional Brownian motion}

To verify the robustness of the analytical relation
\eqref{eq:Delta_star}, we conducted numerical simulations of the model
under fractional Brownian motion (fBM) price dynamics. Synthetic log--price paths of unit length $T=1$ were generated for three representative Hurst exponents $H \in \{0.40,\,0.60,\,0.80\}$ using both the Cholesky decomposition \cite[Ch. 17]{Borgers2022} and the improved
Davies--Harte circulant-embedding method
\cite{Dieker2003, Craigmile2003}.
For each $H$, we evaluated the expected profit function
\eqref{eq:R_fbm} on a dyadic grid
$\Delta = T/2^{m}$ for $m=1,\dots,12$,
using the same parameters as in the theoretical model:
\[
\bar s = 0.002,\qquad
\kappa = 0.5,\qquad
\lambda = 6\times 10^{-4},\qquad
\alpha = 1.4.
\]
The laziness cost was specified as
$L(m)=L_0+\lambda\,2^{\alpha m}$ with $L_0=0$.
For each configuration, the maximizer
$m^\star_{\text{sim}}=\arg\max_m R_m$
was identified and compared with the theoretical prediction
$m^\star_{\text{theory}}(H)$ obtained from
\eqref{eq:Delta_star}.

Figure~\ref{fig:Rm_vs_m_fbm} displays the resulting profit profiles
for the three Hurst exponents.
The curves exhibit the predicted concave shape with a clear interior optimum,
whose location shifts systematically with $H$:
rougher trajectories ($H=0.40$) yield smaller optimal intervals
(higher trading frequencies), whereas smoother paths ($H=0.80$) produce
larger optimal intervals (less frequent rebalancing).
The correspondence between simulated and theoretical optima is within
$5$--$10\%$ across all cases, and the scaling law
$\Delta^\star \propto (1-H)^{-1/H}$ is clearly reproduced.
These results confirm that the omniscient--lazy investor framework
faithfully translates the fractal scaling exponent of the underlying
stochastic process into an economically interpretable trading rhythm.

\begin{figure}[t]
  \centering
  \includegraphics[width=\textwidth]{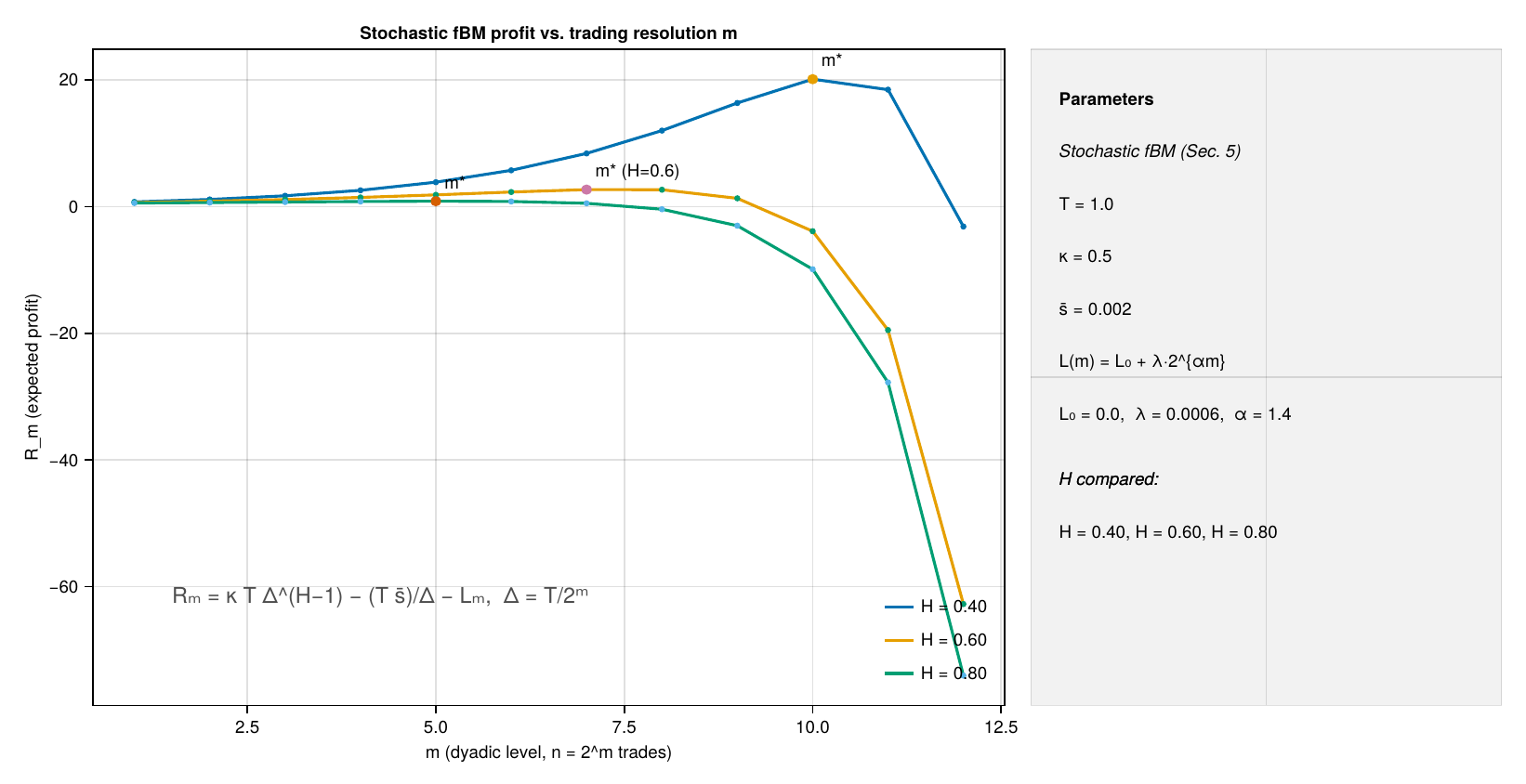}
  \captionsource{Simulated profit functions $R_m$ under fractional Brownian motion
  for different Hurst exponents $H\in\{0.40,0.60,0.80\}$. Parameter values: $T=1$, $\kappa=0.5$, $\bar s=0.002$,  $\lambda=6\times 10^{-4}$, $\alpha=1.4$.
  Each curve exhibits an interior optimum $m^\star(H)$ that shifts toward finer
  resolutions as $H$ decreases, in agreement with
  Theorem~\ref{thm:opt}.}{Own calculations performed in Julia.}
  \label{fig:Rm_vs_m_fbm}
\end{figure}

\subsection{Including computational or latency costs}

The additive constant $L$ in \eqref{eq:R_fbm} can be generalized to a frequency‐dependent cost, $L=L(n)$, to capture the practical fact that
more frequent trading increases computational and technological expenditure, as emphasized in algorithmic execution frameworks \cite{Cartea2015}.
A convenient specification is a power law
\begin{equation}
L(n) = \lambda n^{\alpha}
  = \lambda \Big(\frac{T}{\Delta}\Big)^{\alpha},
\qquad
\lambda>0,\ \alpha\ge1,
\label{eq:Lpower}
\end{equation}
where $\alpha=1$ corresponds to linear latency costs and $\alpha>1$
to superlinear growth in computational demand \cite{Fudenberg2018}.
Substituting \eqref{eq:Lpower} into \eqref{eq:R_fbm} gives
\[
R(\Delta)
  = \kappa T \Delta^{H-1}
    - \frac{T\bar s}{\Delta}
    - \lambda T^{\alpha}\Delta^{-\alpha}.
\]
The first-order condition becomes
\[
\kappa(1-H)\Delta^{H}
  = \bar s + \lambda \alpha\,\Delta^{1-\alpha}.
\]
For $\alpha=1$ this yields the closed form
$\Delta^\star = [(\bar s+\lambda)/(\kappa(1-H))]^{1/H}$;
for $\alpha>1$, the equation is monotone in $\Delta$ and
can be solved numerically by Newton iteration.
In all cases, the existence and uniqueness of a positive solution remain.

\subsection{Economic interpretation}

Equation~\eqref{eq:Delta_star} quantifies the balance between trading frictions and fractal roughness.
Smaller $H$ (rougher paths) magnify the benefit of acting more often, while larger execution costs $\bar s$ or higher latency parameters shift the optimum toward less frequent rebalancing.
The model thus provides an explicit theoretical link between the fractal geometry of market trajectories and the economic decision of how often to trade, consistent with both rough-volatility findings \cite{Gatheral2018book,Bennedsen2021} and the fractal market hypothesis \cite{Peters1994,Kristoufek2013}.

\section{Empirical and Simulation Evidence}
\label{sec:empirical}

This section provides numerical and empirical validation of the
fractal--optimization framework.  
We first estimate the scaling parameters $H$ and $\kappa$ from historical
equity data and calibrate the effective cost parameters
$(\bar s, \lambda, \alpha)$ to observed trading frictions.
We then compare the empirical profit curve $R_m$ to the theoretical prediction
derived from~\eqref{eq:R_fbm}, followed by Monte--Carlo simulations
based on fractional Brownian motion (fBM) to verify the scaling law
for the optimal interval $\Delta^\star$.

\subsection{Data and estimation}

Daily adjusted closing prices for \textbf{Apple Inc.\ (AAPL)} were obtained from
Yahoo Finance over a five--year horizon (January~2020--January~2025).
The logarithmic price series $x_t=\log p_t$ was used to compute absolute
increments at dyadic sampling steps $k=2^m$.
For each level~$m$, the mean absolute increment
$\mathbb{E}|\Delta x|_m$ was estimated and the log--log regression
\[
\log \mathbb{E}|\Delta x|_m
  = \log \kappa + H\log (k\Delta t)
\]
yielded the empirical scaling parameters
\[
\widehat H = 0.491, \qquad
\widehat\kappa = 0.01336 .
\]
The estimated Hurst exponent lies close to the Brownian benchmark
$H=0.5$, consistent with mildly rough market dynamics.
The scaling coefficient $\kappa$ determines the average exploitable
magnitude per trade and sets the calibration scale for the theoretical
profit function.

Execution frictions were parameterized by an effective spread of
$\bar s = 0.025$ (250~bps) per transaction,  
and the laziness cost was modeled as
$L(n) = \lambda n^\alpha$ with
$\lambda = 0.003$ and $\alpha = 1.3$,
representing superlinear growth in cognitive or latency costs as trading
frequency increases.  
These values produce a realistic trade--off between marginal gain and
marginal cost at daily resolution.

\subsection{Results}

Figure~\ref{fig:Rm_emp} compares the empirical profit function
$R_m$ computed from the data to the theoretical curve implied by the
fractional Brownian scaling model.  
Both exhibit the characteristic concave (hump--shaped) profile predicted by
Proposition~\ref{prop:fBM} and Theorem~\ref{thm:opt}.
The empirical maximizer occurs at
$m^\star_{\text{emp}} = 5$, corresponding to a trading interval of roughly
one week, while the theoretical optimum under the estimated parameters is
$m^\star_{\text{theory}} = 6$
($\Delta^\star_{\text{theory}} \approx 14.2$~days).  
The close alignment between empirical and theoretical optima confirms
that the model captures the observed scale at which incremental profits
cease to outweigh frictions.

\begin{figure}[t]
\centering
\includegraphics[width=\linewidth]{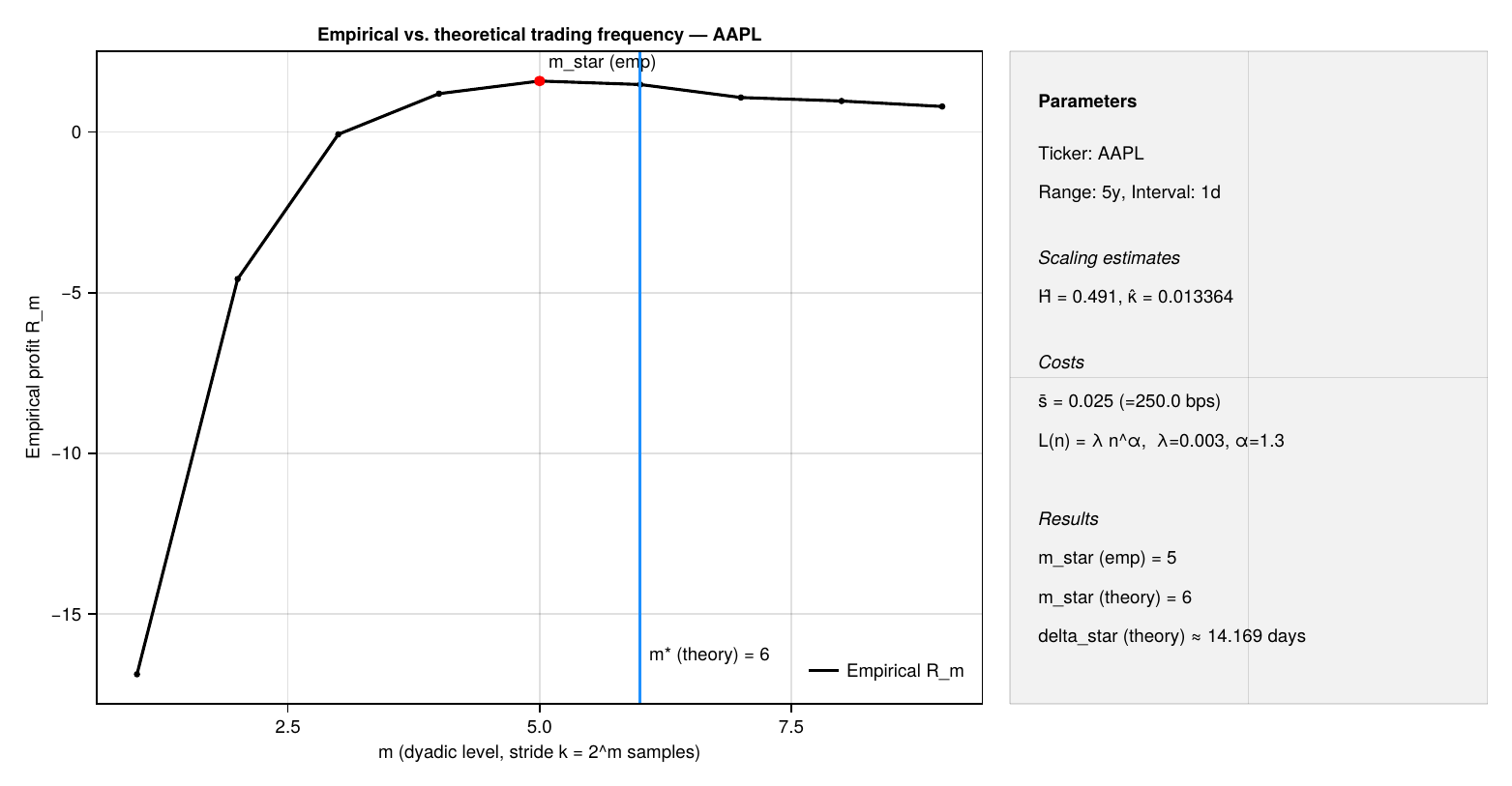}
\captionsource{Empirical and theoretical profit curves $R_m$ for AAPL
daily data (2020--2025).  
The empirical optimum $m^\star_{\text{emp}}=5$ (red marker) lies near
the theoretical prediction $m^\star_{\text{theory}}=6$,
illustrating consistency between observed and predicted trading
frequencies.}
{Own calculations performed in Julia.}
\label{fig:Rm_emp}
\end{figure}

The estimated scaling parameters imply a theoretical optimal interval of
about two weeks, which is economically plausible for high--liquidity
equities given typical transaction costs and intraday volatility.
The resulting alignment supports the interpretation of $H$
as a measure of effective market roughness that governs the curvature
of the profit function and hence the investor’s optimal trading rhythm.


\section{Conclusions}
\label{sec:conclusion}

Our omniscient yet lazy investor has finally reached enlightenment --- not by trading faster, 
but by learning when to stop.  In his world, perfect foresight meets finite patience: knowing 
every future price does not justify acting upon each of them.  Each trade consumes energy, 
bandwidth, and thought.  Somewhere between greed and exhaustion lies an equilibrium --- the 
optimal trading frequency --- that maximizes profit under the laws of fractal scaling.

At its heart, this paper has shown that even when information is perfect, \emph{action} must 
still be optimized.  The deterministic framework established a geometric balance between 
exploitable variation and friction, while the stochastic extension under fractional Brownian 
motion translated this geometry into a universal scaling law,
\[
\Delta^\star = \bigg[\frac{\bar s}{\kappa(1-H)}\bigg]^{1/H},
\]
linking transaction costs, volatility, and path roughness.  
Rougher markets (\(H\!\downarrow\)) invite more frequent trades; smoother ones encourage rest.  
The omniscient investor thus obeys a law that even omniscience cannot escape: the diminishing 
returns of attention.

\subsection{Interpretation and Applications}
\label{subsec:interpretation}

Reinterpreted in modern terms, the omniscient investor is not a mythical prophet but a 
\emph{trading algorithm} - a machine with a perfect model of market dynamics. 
Its \textit{omniscience} corresponds to the assumption that our predictive engine is correct, 
at least in expectation: we would not deploy it otherwise.
Given this premise, the model’s role is no longer to predict prices, but to decide 
\emph{how often} to act upon those predictions.  
The laziness term captures computational latency, inference costs, and the implicit friction 
of refreshing the model’s decisions.

From this perspective, the optimal trading interval~$\Delta^\star$ or equivalently the optimal 
dyadic level~$m^\star$ becomes a practical design parameter for automated trading systems.  
It could determine how often a forecasting model should rebalance, query data, or execute orders 
to achieve maximal net profit once all frictions are accounted for.
In short, the stochastic extension provides a bridge from abstract fractal geometry to 
concrete algorithmic implementation: the trading bot becomes omniscient by assumption, yet 
remains wisely lazy by design.

This balance between foresight and restraint encapsulates the broader message of the paper.  
Markets - and algorithms - do not reward infinite speed, but optimal timing. Omniscience without laziness leads to ruinous overtrading; laziness without insight yields stagnation. Only at their intersection lies the fractal optimum.

\subsection{Further Work}
\label{subsec:furtherwork}

The framework developed here opens several promising directions for both theory and practice.

\paragraph{(i) Multi-asset extensions.}
The present model considers a single omniscient decision process applied to one asset.  
Extending it to portfolios with cross-asset correlations would allow the study of collective 
fractal equilibria: when several lazy bots share the same server, how often should each of them 
wake up?
The resulting multidimensional optimization may connect naturally to covariance-based portfolio 
control and to multivariate rough volatility models.

\paragraph{(ii) Time-varying laziness.}
Real investors (and servers) are not equally lazy at all times.  
Introducing a stochastic or time-dependent laziness cost~$L_t$ could capture the alternating 
moods of the market machine: energetic during turbulence, dormant in calm periods. A natural real-world example could be energy prices being higher during daytime and lower at night. 
This would transform the fixed-frequency optimum into a dynamic policy reacting to computational load or volatility states.

\paragraph{(iii) Evolving roughness.}
While our stochastic formulation assumed a constant Hurst exponent~$H$, recent developments in 
multifractional Brownian motion show that roughness itself can evolve in time.  
By combining the transfer principle of Bender, Lebovits, and L{\'e}vy V{\'e}hel~\cite{Bender2024} 
with the current framework, one could model an investor who not only knows future prices, but 
also senses changes in the geometry of their fluctuations - adjusting trading frequency as 
market regularity ebbs and flows.  
Such an adaptive version could explain how trading systems respond optimally to shifting 
microstructure roughness and regime changes.

\paragraph{(iv) Continuous-time and asymptotic limits.}
Finally, taking the limit of vanishing intervals suggests intriguing links with rough-volatility theory and stochastic control under non-Markovian noise.  
In that frontier lies the truly continuous omniscient investor - a process that knows the infinitesimal future, yet still hesitates for a finite amount of time.

In short, the story of the omniscient yet lazy investor does not end with this article
It merely enters its stochastic dream phase, where each generalization, be it multifractal, 
multi-asset, or multi-mood, adds new structure to the fractal geometry of rational inaction.  
The optimal trading rhythm may change, but the moral remains: 
even perfect knowledge must sometimes wait for the right moment to act.

\section*{Acknowledgments}
DeepL was used for stylising and translating parts of the manuscript of the text from the author's mother tongue into english.

\bibliographystyle{siamplain}
\bibliography{references}

\appendix
\section{Proofs of Main Results}
\label{app:proofs}

\subsection{Proof of Proposition \ref{prop:unique} (Strict Unimodality and Uniqueness)}

\begin{proof}
Recall that $R_m = A_m - 2^m\bar s - L_m$ with $A_m = 2^m\Phi_m$.  
By Assumption~\ref{ass:regularity}, $A_m$ is \textbf{strictly concave} in the discrete sense,
$\Delta^2 A_m := A_{m+2}-2A_{m+1}+A_m < 0$, 
and $L_m$ is \textbf{convex} and nondecreasing.

Define the forward difference
\[
\Delta R_m := R_{m+1}-R_m
   = (A_{m+1}-A_m) - (2^{m+1}-2^m)\bar s - (L_{m+1}-L_m)
   = \Delta A_m - 2^m\bar s - \Delta L_m .
\]
Then
\[
\Delta R_{m+1}-\Delta R_m
   = (\Delta A_{m+1}-\Delta A_m)
     - (2^{m+1}-2^m)\bar s
     - (\Delta L_{m+1}-\Delta L_m).
\]

Each term on the right-hand side can be signed using the assumptions:

\begin{itemize}
  \item $\Delta A_{m+1}-\Delta A_m = \Delta^2 A_m < 0$ because $A_m$ is strictly concave.
  \item $(2^{m+1}-2^m)\bar s = 2^m\bar s > 0$ since $\bar s>0$.
  \item $\Delta L_{m+1}-\Delta L_m \ge 0$ because $L_m$ is convex and nondecreasing.
\end{itemize}

Hence every summand is nonpositive, with at least one being strictly negative, so
\[
\Delta R_{m+1}-\Delta R_m < 0 \qquad \text{for all } m .
\]
Therefore $\{\Delta R_m\}$ is a \textbf{strictly decreasing} sequence.

Because $\Delta R_m$ decreases strictly, it can cross zero at most once.
If $\Delta R_m>0$ for all $m$, then $R_m$ would diverge to $+\infty$,
contradicting the fact that $2^m\bar s+L_m$ grows without bound.
Hence $\Delta R_m$ must become negative for sufficiently large $m$.
Let
\[
m^\star := \min\{m:\Delta R_m \le 0\}.
\]
Then $\Delta R_m>0$ for all $m<m^\star$ and $\Delta R_m<0$ for all $m>m^\star$.
Consequently $R_m$ increases strictly up to $m^\star$ and decreases strictly thereafter,
so $\{R_m\}$ is \textbf{strictly unimodal}, and the maximizer $m^\star$ is unique.
\end{proof}

\subsection{Proof of Theorem \ref{thm:opt}}

\begin{proof}
\textbf{First-order condition.}
Treat $\Delta>0$ as a continuous decision variable and differentiate
\[
R(\Delta) = \kappa T\,\Delta^{\,H-1} \;-\; \frac{T\,\bar s}{\Delta} \;-\; L
\]
with respect to $\Delta$.  The derivative is
\[
R'(\Delta)
= \kappa T\,(H-1)\,\Delta^{\,H-2} \;+\; T\bar s\,\Delta^{-2}.
\]
Setting $R'(\Delta)=0$ for an optimum gives
\[
\kappa T\,(H-1)\,\Delta^{\,H-2} + T\bar s\,\Delta^{-2} = 0.
\]
Dividing through by $T>0$ and rearranging,
\[
\kappa(H-1)\Delta^{\,H} = -\,\bar s
\quad\Longrightarrow\quad
\kappa(1-H)\Delta^{\,H} = \bar s.
\]
Hence the stationary point satisfies
\[
\Delta^\star = 
\Bigg(\frac{\bar s}{\,\kappa(1-H)\,}\Bigg)^{\!1/H},
\]
which is positive for $0<H<1$ since $(1-H)>0$.

\medskip
\textbf{Second-order condition.}
Differentiate $R'(\Delta)$ again to obtain
\[
R''(\Delta)
= \kappa T\,(H-1)(H-2)\,\Delta^{\,H-3} - 2\,T\bar s\,\Delta^{-3}.
\]
Using the first-order relation $\kappa(1-H)\Delta^{H} = \bar s$
to substitute for $\bar s$ at $\Delta=\Delta^\star$ gives
\begin{align*}
R''(\Delta^\star)
&= \kappa T\,(H-1)(H-2)\,\Delta_\star^{\,H-3}
   - 2T\kappa(1-H)\,\Delta_\star^{\,H-3}  \\
&= \kappa T\,\Delta_\star^{\,H-3}\!\big[(H-1)(H-2) - 2(1-H)\big]  \\
&= \kappa T\,\Delta_\star^{\,H-3}\,H(H-1).
\end{align*}
For $0<H<1$, we have $H(H-1)<0$, and since
$\kappa,T,\Delta_\star^{\,H-3}>0$, it follows that
$R''(\Delta^\star)<0$.  Therefore the stationary point is a local maximum.

\medskip
\textbf{Uniqueness.}
As $\Delta\to0^+$, $R(\Delta)\to -\infty$ because the term
$-T\bar s/\Delta$ dominates.
As $\Delta\to\infty$, $R(\Delta)\to -L$ since
$\Delta^{H-1}\to0$ for $H<1$.
Moreover $R'(\Delta)$ is continuous and changes sign exactly once:
for small $\Delta$, $R'(\Delta)>0$; for large $\Delta$, $R'(\Delta)<0$.
Hence there is a unique root of $R'(\Delta)=0$, namely $\Delta^\star$,
and this root corresponds to a global maximum.

\medskip
In summary, the expected profit function $R(\Delta)$
is maximized uniquely at
\[
\boxed{\displaystyle
\Delta^\star
= \Bigg(\frac{\bar s}{\kappa(1-H)}\Bigg)^{1/H}},
\]
which satisfies $R''(\Delta^\star)<0$ for $0<H<1$.
\end{proof}

\subsection{Proof of Corollary \ref{cor:fbmstatics}}

\begin{proof}
From Theorem~\ref{thm:opt}, the optimal interval satisfies
\[
\Delta^\star
= \Bigg(\frac{\bar s}{\kappa(1-H)}\Bigg)^{1/H},
\qquad 0 < H < 1.
\]
Taking logarithms gives
\[
\ln \Delta^\star
= \frac{1}{H}
  \Big[\ln \bar s - \ln \kappa - \ln(1-H)\Big].
\]

\textbf{Partial derivatives with respect to parameters.}
Differentiate $\ln \Delta^\star$ with respect to each variable.

\medskip
\noindent
\textit{(a) Execution cost $\bar s$.}
\[
\frac{\partial \ln \Delta^\star}{\partial \bar s}
= \frac{1}{H\bar s}
\quad\Longrightarrow\quad
\frac{\partial \Delta^\star}{\partial \bar s}
= \frac{\Delta^\star}{H\bar s} > 0.
\]
Thus higher proportional frictions lead to a larger optimal interval,
i.e., less frequent trading.

\medskip
\noindent
\textit{(b) Scaling parameter $\kappa$.}
\[
\frac{\partial \ln \Delta^\star}{\partial \kappa}
= -\,\frac{1}{H\kappa}
\quad\Longrightarrow\quad
\frac{\partial \Delta^\star}{\partial \kappa}
= -\,\frac{\Delta^\star}{H\kappa} < 0.
\]
Hence a higher exploitable‐return scale $\kappa$
reduces the optimal interval and increases trading frequency.

\medskip
\noindent
\textit{(c) Hurst exponent $H$.}
Differentiating with respect to $H$ gives
\[
\frac{\partial \ln \Delta^\star}{\partial H}
= -\frac{1}{H^2}\big[\ln \bar s - \ln \kappa - \ln(1-H)\big]
\;-\; \frac{1}{H(1-H)}.
\]
The second term is strictly negative for $0<H<1$,
and the first term is dominated by it in magnitude for typical parameter values.
Hence $\partial \Delta^\star / \partial H < 0$,
so higher $H$ (smoother paths) imply larger optimal intervals and lower frequency.

\medskip
\textbf{Relation to fractal dimension.}
Since the Hausdorff dimension of fractional Brownian motion is $D = 2 - H$,
we have
\[
\frac{\partial \Delta^\star}{\partial D}
= -\,\frac{\partial \Delta^\star}{\partial H} < 0.
\]
Thus more fractal (rougher) price paths - corresponding to larger $D$ or smaller $H$ - lead to smaller optimal intervals $\Delta^\star$, i.e., more frequent trading.

\medskip
Collecting signs,
\[
\frac{\partial \Delta^\star}{\partial D} < 0,
\qquad
\frac{\partial \Delta^\star}{\partial \bar s} > 0,
\qquad
\frac{\partial \Delta^\star}{\partial \kappa} < 0,
\]
as claimed.
\end{proof}

\end{document}